\newcommand{\gf}{{\mathbb{F}}}
\newtheorem{theorem}{Theorem}
\newtheorem{lemma}[theorem]{Lemma}
\newtheorem{corollary}[theorem]{Corollary}
\newtheorem{definition}{Definition}
\newtheorem{example}{Example}
\begin{document}
	\title{ Two low differentially uniform power permutations over odd characteristic finite fields: APN and differentially $4$-uniform functions}
	\author{\IEEEauthorblockN{Haode Yan\IEEEauthorrefmark{1},
	Sihem Mesnager\IEEEauthorrefmark{2}, and Xiantong Tan\IEEEauthorrefmark{1}}\\
	\IEEEauthorblockA{\IEEEauthorrefmark{1}School of Mathematics, Southwest Jiaotong University, Chengdu, China. }\\
	\IEEEauthorblockA{\IEEEauthorrefmark{2}Department of Mathematics, University of Paris VIII, 93526 SaintDenis, with University Sorbonne Paris Cit{\'{e}}, LAGA, UMR 7539, CNRS, 93430 Villetaneuse, and also with the T{\'{e}}l{\'{e}}com Paris, France. }\\
	\IEEEauthorblockA{\href{mailto: hdyan@swjtu.edu.cn}{hdyan}@swjtu.edu.cn, \href{mailto: smesnager@univ-paris8.fr}smesnager@univ-paris8.fr}, \href{mailto: Xiantong@my.swjtu.edu.cn}{Xiantong}@my.swjtu.edu.cn\\
	\IEEEauthorblockA{Corresponding Author: Sihem Mesnager \quad Email: smesnager@univ-paris8.fr}}
	%\date{\today} smesnager@univ-paris8.fr
	\maketitle
	\begin{abstract}

	Permutation polynomials over finite fields are fundamental objects as they are used in various theoretical and practical applications in cryptography, coding theory, combinatorial design, and related topics. This family of polynomials constitutes an active research area in which advances are being made constantly. In particular, constructing infinite classes of permutation polynomials over finite fields with good differential properties (namely, low) remains an exciting problem despite much research in this direction for many years.

 This article exhibits low differentially uniform power permutations over finite fields of odd characteristic. Specifically, its objective is twofold concerning the power functions $F(x)=x^{\frac{p^n+3}{2}}$ defined over the finite field $\gf_{p^n}$ of order $p^n$, where $p$ is an odd prime, and $n$ is a positive integer. The first is to complement some former results initiated by Helleseth and Sandberg in \cite{HS} by solving the open problem left open for more than twenty years concerning the determination of the differential spectrum of $F$ when $p^n\equiv3\pmod 4$ and $p\neq 3$. The second is to determine the exact value of its differential uniformity. Our achievements are obtained firstly by evaluating some exponential sums over $\gf_{p^n}$ (which amounts to evaluating the number of $\gf_{p^{n}}$-rational points on some related curves and secondly by computing the number of solutions in $(\gf_{p^n})^4$ of a system of equations presented by Helleseth, Rong, and Sandberg in [``New families of almost perfect nonlinear power mappings," IEEE Trans. Inform. Theory, vol. 45. no. 2, 1999], naturally appears while determining the differential spectrum of $F$. We show that in the considered case ($p^n\equiv3\pmod 4$ and $p\neq 3$), $F$ is an APN power permutation when $p^n=11$, and a differentially $4$-uniform power permutation otherwise.

	\end{abstract}
	
	{\bf Keywords:} Vectorial function,  Power function, APN,  Differential uniformity, Differential spectrum, Elliptic curve, Permutation, Finite field.
	
	{\bf Mathematics Subject Classification:} 06E30, 11T06, 94A60, 94D10.
	
	\section{Introduction}

	Let $\gf_{q}$ be the finite field with $q$ elements, where $q$ is a prime power ($q=p^n$) and $n$ is a positive integer. We denote by $\gf_{q}^*$ the multiplicative cyclic group of nonzero elements of the finite field $\gf_{p^n}$. Any function $F: \gf_{q} \rightarrow \gf_{q}$ from $\gf_{q}$ to itself can be uniquely represented as a univariate polynomial of degree less than $q$. Therefore, $F$ can always be seen as a polynomial in $\gf_{q}[x]$. A polynomial $F \in  \gf_{q}[x]$ is called a permutation polynomial (PP) of $ \gf_{q}$ if the mapping $x \mapsto F(x)$ is a permutation of $ \gf_{q}$. Permutation polynomials over finite fields are important objects not only theoretically but also in practical applications such as cryptography and related topics. Nowadays,  designing infinite classes of permutation polynomials over finite fields with good cryptographic properties remains an exciting research topic.
	
In the binary case ($p=2$), such functions $F$ (also called vectorial Boolean functions) are important components in symmetric cryptography. When they are used inside a symmetric cryptosystem (namely, in a block cipher), they are called Substitution boxes (S-boxes for short) since they play a crucial role in the security of such ciphers. In this cryptographic context, cryptographic functions should resist differential cryptanalysis, introduced by Biham and Shamir~\cite{BS}, which is one of the most powerful attacks on block ciphers. To quantify the ability of a given function $F$ to resist the differential attack, Nyberg~\cite{Nyberg93} introduced the notion of differential uniformity, which is closely related to its difference distribution table (DDT for short). These tools are defined as follows.
 For any $ a, b \in \gf_{q}$, the DDT entry at point $ (a,b) $, denoted by $ \delta_F(a,b) $, is defined as \[\delta_F(a,b)=\big{|} \{x \in \gf_{q}: ~F(x+a)-F(x)=b\} \big{| },\]
	where $ \big{|} S \big{|} $ denotes the cardinality of the set $S$.
	The differential uniformity of the function $ F $, denoted by $ {\Delta _F} $, is then defined as
	\[{\Delta _F}=\max\{\delta_F(a,b): ~a \in {\mathbb{F}_{q}^*}, b \in \mathbb{F}_{q}\}  ,\]
	where $\gf_{q}^*=\gf_{q}\setminus\{0\}$. For an S-box $F$, the smaller the value $\Delta_F$ is, the better the contribution of $ F $ to the resistance against differential attack. When $\Delta_{F}=1$, $F$ is said to be \emph{perfect nonlinear} (PN for short) function. 
	Whereas, when $\Delta_F=2$, $F$ is called an \emph{almost perfect nonlinear} (APN for short) function. Note that PN functions over even characteristic finite fields do not exist. PN and APN functions play an important role in the theoretical aspects of several domains.   
	 Particularly, vectorial functions $F$, which are permutations, have an extreme interest, especially in symmetric cryptography. Typically, block ciphers use a permutation as an S-box during the encryption process, while the compositional inverse of the S-box is used during the decryption process. A nice survey on cryptographic vectorial Boolean functions (including developments on differential uniform functions) can be found in the recent book \cite{CarletBook}. Recent progress on cryptographic functions with low differential uniformity can be found in \cite{BD,BCHLS,BCL,CM,DO,DOB1,DOB2,HRS,HS,PT,QTLG,QTTL,TCT,TZ,ZHS,ZKW,ZW} and the references therein.
	
		Power functions with low differential uniformity serve as good candidates for the
	design of S-boxes because of their strong resistance to differential attacks and the usually low implementation cost in hardware. When $F$ is a power function, i.e., $F(x)=x^d$ for an integer $d$, one easily see that $\delta_F(a,b)=\delta_F(1,{b/{a^d}})$ for all $a\in \gf_{q}^*$ and $b\in \gf_{q}$.
	That is to say, the differential properties of $F$ are wholly determined by the values of $\delta_F(1,b)$ as $b$ runs through $\gf_{q}$. Their resistance to the standard differential attack attracted attention and was investigated. The notion of the differential spectrum of a power function was, in fact, firstly proposed by Blondeau, Canteaut, and Charpin in \cite{BCC} as follows.
	
	\begin{definition}\label{def1}
		Let $F(x)=x^d$ be a power function over $\gf_{q}$ with differential uniformity $ \Delta_F $. Denote
		\[\omega_i= \big{|} \left\{b\in \gf_{q}: \delta_F(1, b)=i\right\} \big{|} ,\,\,0\leq i\leq \Delta_F.\]
		The differential spectrum of $F$ is defined by the following multi-set
		\[ DS_F =\left\{\omega_i>0:0\leq i \leq \Delta_F \right\}.\]
	\end{definition} 

The  DDT distribution of a power function can be deduced via its differential spectrum. Moreover, it has been shown in \cite{BCC} that the elements in the differential spectrum of $F$ satisfy the two following helpful identities in the set $\mathbb{N}$ of  the positive integers.
	
	\begin{equation}\label{omegaiomega}
		\sum_{i=0}^{\Delta _{F}}\omega _{i}=\sum_{i=0}^{\Delta _{F}}(i\cdot \omega _{i})=q.
	\end{equation}
	The following lemma plays an important role
	in determining the differential spectrum of $F$.
	\begin{lemma}\label{identicalequation}
		(\cite{HRS}, Theorem 10)
			Keep the notation introduced in Definition \ref{def1}. Denote by $N_{4}$ the
		number of solutions $(x_{1},x_{2},x_{3},x_{4})\in (\gf_{q})^{4}$ of the equation system
		\begin{equation}\label{equsystem}
			\Bigg\{ \begin{array}{ll}
				{x_1} - {x_2} + {x_3} - {x_4} &= 0\\
				x_1^d - x_2^d + x_3^d - x_4^d &= 0.
			\end{array}
		\end{equation}
		Then we have 
		\begin{equation}\label{i^2omega}
			\sum_{i=0}^{\Delta _{F}}i^{2}\omega_{i}=\frac{N_{4}-q^2}{q-1}.
		\end{equation}
	\end{lemma}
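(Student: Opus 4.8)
The plan is to rewrite $N_{4}$ purely in terms of the DDT entries $\delta_F(a,b)$ and then collapse the double sum using the power-function symmetry. First I would use the linear equation of the system \eqref{equsystem} to parametrize its solutions: setting $a:=x_1-x_2$, the condition $x_1-x_2+x_3-x_4=0$ is equivalent to $x_4-x_3=a$, so $x_1=x_2+a$ and $x_4=x_3+a$ are forced by the triple $(a,x_2,x_3)\in(\gf_q)^3$, and this correspondence is a bijection onto the solution set of the first equation. Substituting into the second equation, it becomes $(x_2+a)^d-x_2^d=(x_3+a)^d-x_3^d$.

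Next, for each fixed $a\in\gf_q$ I would count the pairs $(x_2,x_3)$ meeting this identity by conditioning on the common value $b:=(x_2+a)^d-x_2^d$. Since, by definition, $\delta_F(a,b)$ is the number of $x$ with $(x+a)^d-x^d=b$, the number of such pairs is $\sum_{b\in\gf_q}\delta_F(a,b)^2$. Summing over $a$ then gives the clean identity $N_4=\sum_{a\in\gf_q}\sum_{b\in\gf_q}\delta_F(a,b)^2$.

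The last step is to split off the degenerate row $a=0$ and invoke the power-function property. For $a=0$ the map $x\mapsto(x+a)^d-x^d$ is identically $0$, so $\delta_F(0,0)=q$ and $\delta_F(0,b)=0$ for $b\neq 0$, contributing exactly $q^2$ to $N_4$. For $a\neq 0$, the identity $\delta_F(a,b)=\delta_F(1,b/a^d)$ noted in the introduction (together with the fact that $b\mapsto b/a^d$ permutes $\gf_q$) shows that $\sum_{b}\delta_F(a,b)^2=\sum_{b}\delta_F(1,b)^2$ for every nonzero $a$; since there are $q-1$ such rows, they contribute $(q-1)\sum_{b\in\gf_q}\delta_F(1,b)^2$. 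Finally, grouping the values $b$ by $\delta_F(1,b)=i$ turns $\sum_b\delta_F(1,b)^2$ into $\sum_{i=0}^{\Delta_F}i^2\omega_i$ by the definition of $\omega_i$. Hence $N_4=q^2+(q-1)\sum_{i=0}^{\Delta_F}i^2\omega_i$, and solving for the sum yields \eqref{i^2omega}.

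I do not expect a genuine obstacle here: the argument is essentially bookkeeping. The two places to be careful are (i) checking that the parametrization by $(a,x_2,x_3)$ really is a bijection onto the solutions of the first equation, so that nothing is double-counted or missed, and (ii) correctly isolating the row $a=0$, which is exactly the source of the $-q^2$ in the stated formula — dropping it would give the wrong normalization. (The same computation with first powers in place of squares reproduces the companion identity $\sum_i i\,\omega_i=q$ from \eqref{omegaiomega}.)
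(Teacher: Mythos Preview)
Your argument is correct and is the standard counting proof of this identity. Note, however, that the paper does not supply its own proof of Lemma~\ref{identicalequation}: the result is simply quoted from \cite{HRS} (Theorem~10), so there is no in-paper argument to compare against. Your derivation---parametrizing the linear constraint by $(a,x_2,x_3)$, rewriting the count as $\sum_{a}\sum_{b}\delta_F(a,b)^2$, isolating the $a=0$ row for the $q^2$ term, and then using the power-function symmetry $\delta_F(a,b)=\delta_F(1,b/a^d)$ to collapse the remaining $q-1$ rows---is exactly the expected proof and would serve as a self-contained justification if one were needed.
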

	
Even though determining the differential spectrum of functions and designing whose which are bijective of low differential uniformity is of high importance in the cryptographic framework, estimating the differential behaviors of functions and designing low differential uniform ones (such as APN, PN, differential $4$-uniform functions) in the non-binary case ($p$ odd), as demonstrated in many precious papers in the literature, also remains a high interest. However, it is challenging to determine a power function's differential spectrum ultimately. Only a few classes of power functions over odd characteristic finite fields have known differential spectra. The known results on power functions $F$ over $\gf_{p^{n}}$ ($p$ is odd) for which differential spectrum has been determined, are summarized in Table \ref{t1}. 
	
		\begin{table}[h]
		\renewcommand{\arraystretch}{1.7}
		\caption{Power functions $F(x)=x^{d}$ over $\gf_{p^{n}}$ ($p$ is odd) with known differential spectrum}
		\label{t1}
		\centering
		\begin{tabular}{|c|c|c|c|}
			\hline
			$d$ & Conditions & $\Delta_{F}$ & Ref. \\ 
			\hline\hline
			$2\cdot 3^{\frac{n-1}2}+1$ & $n$ is odd & 4 & \cite{Dobbertin2001}\\
			\hline
			$\frac{p^k+1}2$ & $\gcd(n,k)=e$ & $\frac{p^e-1}2$ or $p^e+1$ & \cite{CHNC}\\
			\hline
			$\frac{p^n+1}{p^m+1}+\frac{p^n-1}{2}$ & $p\equiv3\pmod4$, $m\mid n$ and $n$ odd & $\frac{p^m+1}2$  & \cite{CHNC}\\
			\hline
			$p^{2k}-p^k+1$ & $\gcd(n,k)=e$,\ $\frac ne$ is odd & $p^e+1$ & \cite{Lei2021,Yan}\\
			\hline
			$p^n-3$ & any $n$ & $\leq$5 & \cite{XZLH,YXLHXL}\\
			\hline
			$p^m+2$ &  $p>3$, $n=2m$ & 2 or 4  & \cite{MXLH}\\
			\hline
			$\frac{5^n-3}2$ & any $n$ & 4 or 5 & \cite{YL}\\\hline
			$\frac{p^n-3}2$ & $p^n\equiv3\pmod4$, $p^n>7$ and $p^{n}\ne 27$ & 2 or 3 & \cite{YMT}\\\hline
			$\frac{p^n+3}2$ & $p\geq 5$, $p^n\equiv1\pmod4$ & 3 & \cite{JLLQ}\\\hline
				$\frac{p^n+3}2$ & $p^n= 11$ & 2 & This paper\\\hline
			$\frac{p^n+3}2$ & $p^n\equiv3\pmod4$, $p\neq 3$, $p^n\neq 11$ & 4 & This paper\\\hline
		\end{tabular}
	\end{table}

	In this paper, we shall focus on the power function $F(x)=x^{\frac{p^n+3}{2}}$ over $\gf_{p^n}$, where $p$ is an odd prime and $n$ is a positive integer. 
	The differential properties of such $F$ have attracted interest since 1997 when Helleseth and Sandberg studied the differential uniformity of $F$ and proved the following main result.
	\begin{theorem}[\cite{HS}, Theorem 3]\label{tor}Let $p$ be an odd prime, $d=\frac{p^n+3}{2}$ and let $F(x)=x^d$, then
		\begin{equation*}
			\Delta_F \leq \left\{\begin{array}{ll}
				1, & ~\mathrm{if}~p=3~\mathrm{and}~n~\mathrm{even},\\ 
				3, & ~\mathrm{if}~p\neq3~\mathrm{and}~p^n\equiv1~(\mathrm{mod}~4),\\
				4, & ~\mathrm{otherwise.}
			\end{array}\right.
		\end{equation*}
	\end{theorem}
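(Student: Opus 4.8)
The plan is to reduce the computation of $\Delta_F$ to a bounded case analysis resting on one algebraic observation: for every $x\in\gf_{p^n}$ one has $x^d=\eta(x)x^2$, where $\eta$ is the quadratic multiplicative character of $\gf_{p^n}$ extended by $\eta(0)=0$. Indeed $d=\frac{p^n+3}{2}=\frac{p^n-1}{2}+2$, and Euler's criterion gives $x^{(p^n-1)/2}=\eta(x)$ on $\gf_{p^n}^{*}$. Since $F$ is a monomial, $\delta_F(a,b)=\delta_F(1,b/a^d)$ for $a\neq0$, so $\Delta_F=\max_{b}N(b)$, where $N(b)$ counts the $x\in\gf_{p^n}$ with $\eta(x+1)(x+1)^2-\eta(x)x^2=b$. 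The whole task is to bound $N(b)$ by $1$, $3$ or $4$ according to the stated trichotomy.

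I would then split $\gf_{p^n}\setminus\{0,-1\}$ into the four sets $S_{\varepsilon_1\varepsilon_2}=\{x:\eta(x)=\varepsilon_1,\ \eta(x+1)=\varepsilon_2\}$ with $\varepsilon_i\in\{1,-1\}$, treating $x\in\{0,-1\}$ by hand (these pin down at most two values of $b$). On $S_{++}$ the equation becomes the linear $2x+1=b$ and on $S_{--}$ the linear $-2x-1=b$, so each offers a single candidate, $u=\frac{b-1}{2}$ and $v=\frac{-b-1}{2}=-(u+1)$ respectively; on $S_{+-}$ it becomes $2x^2+2x+(1+b)=0$ and on $S_{-+}$ it becomes $2x^2+2x+(1-b)=0$, each giving at most two roots. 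A candidate is counted only if it genuinely lies in the set of its branch. The organizing remark is that the two roots of each quadratic have sum $-1$, hence form a pair $\{x_1,-(x_1+1)\}$; together with $u+1=-v$, $v+1=-u$, every membership test collapses to a statement about $\eta(-1)\cdot\eta(\text{a root})$, so the count bifurcates on the sign of $\eta(-1)$.

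If $\eta(-1)=1$ (i.e.\ $p^n\equiv1\pmod4$), the conditions "$u\in S_{++}$'' and "$v\in S_{--}$'' become incompatible, and for each quadratic at most one of its two roots can meet the membership condition of its branch; hence $S_{++}\sqcup S_{--}$, $S_{+-}$ and $S_{-+}$ each yield at most one solution, and checking the few degenerate $b$ (a candidate equal to $0$ or $-1$, or the two quadratics coinciding) gives $N(b)\le3$. If $\eta(-1)=-1$ (i.e.\ $p^n\equiv3\pmod4$), the same relations invert: "$u\in S_{++}$'' $\iff$ "$v\in S_{--}$'', and the two roots of each quadratic meet the branch condition together or not at all, so each of the three families $S_{++}\sqcup S_{--}$, $S_{+-}$, $S_{-+}$ yields $0$ or $2$ — a priori allowing only $N(b)\le6$. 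The decisive point is that a family can be active only if the product of the roots it produces is a nonzero square; tracking this product forces $\eta(b-1)=\eta(2)$ whenever $S_{++}\sqcup S_{--}$ is active but $\eta(b-1)=-\eta(2)$ whenever $S_{-+}$ is active, so those two families are mutually exclusive. Consequently at most two of the three families are ever active, whence $N(b)\le4$, once more after disposing of the handful of degenerate $b$.

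The remaining case $p=3$, $n$ even is quickest: write $d=\frac{3^n+3}{2}=3\cdot\frac{3^{n-1}+1}{2}$, so $F(x)=\bigl(x^{(3^{n-1}+1)/2}\bigr)^{3}$ is the image under the Frobenius automorphism $x\mapsto x^3$ of the Coulter--Matthews monomial $x^{(3^{n-1}+1)/2}$, which is planar on $\gf_{3^n}$ because $\gcd(n-1,2n)=1$ when $n$ is even; composing a planar map with a field automorphism preserves planarity (the additivity of the automorphism turns $D_aF$ into the automorphic image of $D_a\bigl(x^{(3^{n-1}+1)/2}\bigr)$, still a bijection), so $\Delta_F=1$. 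Collecting the three cases yields the stated bound. I expect the real work to be in the $\eta(-1)=-1$ analysis — where the naive four-branch count only gives $N(b)\le6$ and one genuinely needs the product-of-roots argument, together with careful bookkeeping of the degenerate $b$, to descend to $4$.
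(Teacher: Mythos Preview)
This theorem is not proved in the paper: it is quoted verbatim from Helleseth and Sandberg~\cite{HS} and used as input for the later results. So there is no ``paper's own proof'' to compare against; the relevant benchmark is the original 1997 argument, and that argument is exactly the four-branch case analysis on $(\chi(x),\chi(x+1))$ that you outline, resting on the identity $x^{d}=\chi(x)x^{2}$. The paper itself uses this same decomposition repeatedly (see the proofs of Lemma~7 and Lemma~8), so your approach is the standard and expected one.

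Your sketch is correct in all essential points. The reduction to the linear equations on $S_{++},S_{--}$ and the quadratics on $S_{+-},S_{-+}$ is right; the pairing $x\leftrightarrow -x-1$ via the common root-sum $-1$ is exactly what makes the $\eta(-1)=1$ case collapse to at most one contribution per family; and the product-of-roots observation forcing $\eta(b-1)=\eta(2)$ for $S_{++}\sqcup S_{--}$ but $\eta(b-1)=-\eta(2)$ for $S_{-+}$ is precisely the device that cuts the naive bound $6$ down to $4$ in the $\eta(-1)=-1$ case. The Coulter--Matthews route for $p=3$, $n$ even is also correct: $\gcd(n-1,2n)=1$ holds because $n-1$ is odd and coprime to $n$, and composing with Frobenius preserves planarity by additivity.

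Two small points to tighten when you write it out. First, the ``$0$ or $2$'' dichotomy for each quadratic family in the $\eta(-1)=-1$ case silently assumes distinct roots; when $b=\pm\tfrac{1}{2}$ a quadratic acquires a repeated root at $-\tfrac{1}{2}$ and the family contributes at most $1$, which only helps the bound but should be mentioned. Second, the degenerate values $b\in\{1,-1\}$ (where one of $u,v$ or a quadratic root hits $\{0,-1\}$) and the contributions of $x\in\{0,-1\}$ themselves need to be checked explicitly; in particular, when $p^n\equiv3\pmod4$ both $x=0$ and $x=-1$ land on $b=1$, and you should verify directly that $N(1)\le 4$ there (it equals $3+\chi(-3)$, as the paper records in Lemma~8). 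None of this changes the conclusion.
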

	
		It is known that  if $p=3$ then $F$ it is equivalent to the power function $x^{\frac{3^{n-1}+1}{2}}$,  whose differential spectrum was determined in \cite{CHNC}.  
The differential spectrum of $F$  in the case $p^n\equiv1\pmod 4$ and $p\neq3$  was determined by  Jiang,  Li,  Li, and  Qu (\cite{JLLQ}) by proving that $F$  is a differential $3$-uniform.  Very recently, the same authors have resolved in  \cite{JLLQ2} the problem of determining the differential spectrum of $F$ when $p=3$ and $n$ is an odd integer. However, the problem of determining the differential spectrum and its corresponding differential uniformity is left open in the remaining case where  $p^n\equiv3\pmod 4$ and $p\neq 3$. The ultimate objective of this article is to study the differential spectrum of $F$ in the open case. We emphasize  that  power functions $F(x)=x^{\frac{p^n+3}{2}}$ over $\gf_{p^n}$ where  $p^n\equiv3\pmod 4$ and $p\neq 3$, are  very interesting since they induce PP. We shall employ several mathematical tools to  prove that these PP process a low differential uniformity, precisely either $2$ (that is APN functions) or $4$.

The rest of this paper is organized as follows. In Section \ref{A}, we  recall some preliminaries,  introduce some basic concepts related to quadratic multiplicative characters,  and present results on quadratic multiplicative character sums,  which will be useful in the rest of the paper. We shall notice that mathematically, the determination of the differential spectrum of $F$ and the computation of its corresponding differential uniformity amounts to determine the number of solutions of certain related equation systems over $(\gf_{p^n})^4$ presented by Helleseth et al. in \cite{HRS} and to evaluate some related exponential sums over $\gf_{p^n}$ (which amounts to evaluate the number of $\gf_{p^{n}}$-rational points on some related curves). The results of our approach are given in Section \ref{B}, which particularly gives the number of solutions of the considered equation systems. Using our approach and the derived mathematical results in Section \ref{B}, we present in Section \ref{C} our computation of the differential spectrum of $F$ and, next, the corresponding differential uniformity. Finally, Section \ref{E} concludes this paper.

	\section{On quadratic character sums}\label{A}
Let $\gf_{p^n}$ be the finite field with $p^n$ elements, where $p$ is an odd prime, and $n$ is a positive integer. And let $\gf_{p^n}^{*}=\gf_{p^n}\setminus\{0\}$. 
This section mainly introduces some basic results on quadratic multiplicative character sums $\chi$ over $\gf_{p^n}$, i.e.,
\begin{equation*}
	\chi(x)=x^{\frac{p^{n}-1}{2}}=\left\{\begin{array}{ll}
		1, & \hbox{if $x$ is a square,}\\ 
		0, & \hbox{if $x=0$,}\\
		-1, & \hbox{if $x$ is a nonsquare.}
	\end{array}\right.
\end{equation*}
These exponential sums will appear naturally in our study of the differential spectra of the function $F(x)=x^{\frac{p^{n}+3}{2}}$ over the finite field $\gf_{p^n}$. These connections are highlighted below with some results in the case where $p^n \equiv 3 (\mathrm{mod}~4)$.\\

Let $\gf_{{p^{n}}}[x]$ be the polynomial ring over $\gf_{{p^{n}}}$. We consider the sum involving the quadratic multiplicative sums of the form
	\begin{equation*}
		\sum_{x\in \gf_{{p^{n}}}}\chi(f(x))
	\end{equation*}
	with $f(x)\in \gf_{{p^{n}}}[x]$. The case of $\deg(f(x))=1$ is trivial, and for $\deg(f(x))=2$, the following explicit formula was established in \cite{FF}.
	
	\begin{lemma}\label{2}
		(\cite{FF}, Theorem 5.48)
		Let $f(x)=a_{2}x^{2}+a_{1}x+a_{0}\in \gf_{{p^{n}}}[x]$ with $p$ odd and $a_{2} \neq 0$.  Set $\Delta:=a_{1}^{2}-4a_{0}a_{2}$ the discriminant of $f(x)=0$ and denote by $\chi$ be the quadratic character of $\gf_{{p^{n}}}$. Then
		\begin{equation*}
			\sum_{x\in \gf_{{p^{n}}}}\chi(f(x))=\left\{\begin{array}{ll}
				-\chi(a_{2}), & ~\mathrm{if}~\Delta\neq0,\\ 
				(p^{n}-1)\chi(a_{2}),  & ~\mathrm{if}~\Delta=0.
			\end{array}\right.
		\end{equation*}
	\end{lemma}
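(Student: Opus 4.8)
The plan is to complete the square so as to replace $f$ by a pure shifted quadratic, and then to evaluate the resulting character sum by a short counting identity; this is the classical computation underlying \cite{FF}, Theorem 5.48.

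First I would use that, since $p$ is odd and $a_{2}\neq0$, the map $x\mapsto 2a_{2}x+a_{1}$ is a bijection of $\gf_{p^n}$ and that $4a_{2}f(x)=(2a_{2}x+a_{1})^{2}-\Delta$. Because $\chi$ is multiplicative and $\chi(4)=\chi(2)^{2}=1$, this yields
\[
\chi(a_{2})\sum_{x\in\gf_{p^n}}\chi(f(x))=\sum_{x\in\gf_{p^n}}\chi\bigl(4a_{2}f(x)\bigr)=\sum_{y\in\gf_{p^n}}\chi(y^{2}-\Delta),
\]
and since $\chi(a_{2})^{2}=1$ it then suffices to evaluate $S:=\sum_{y\in\gf_{p^n}}\chi(y^{2}-\Delta)$ and multiply back by $\chi(a_{2})$.

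If $\Delta=0$ this is immediate: $\chi(y^{2})=1$ for every $y\neq0$ while $\chi(0)=0$, so $S=p^{n}-1$, which gives the second case. For $\Delta\neq0$ I would use that for each $z\in\gf_{p^n}$ the number of $y$ with $y^{2}=z$ equals $1+\chi(z)$, so that
\[
S=\sum_{z\in\gf_{p^n}}\bigl(1+\chi(z)\bigr)\chi(z-\Delta)=\sum_{z\in\gf_{p^n}}\chi(z-\Delta)+\sum_{z\in\gf_{p^n}}\chi(z)\chi(z-\Delta).
\]
The first sum vanishes because $\sum_{w\in\gf_{p^n}}\chi(w)=0$. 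In the second sum the term $z=0$ contributes nothing, and for $z\neq0$ we have $\chi(z)\chi(z-\Delta)=\chi\bigl(z(z-\Delta)\bigr)=\chi\bigl(1-\Delta z^{-1}\bigr)$; as $z$ runs over $\gf_{p^n}^{*}$ the quantity $1-\Delta z^{-1}$ runs over $\gf_{p^n}\setminus\{1\}$, so this sum equals $\sum_{w\in\gf_{p^n}\setminus\{1\}}\chi(w)=-\chi(1)=-1$. Hence $S=-1$, and multiplying by $\chi(a_{2})$ gives $\sum_{x}\chi(f(x))=-\chi(a_{2})$, the first case.

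There is no genuine obstacle here; the argument only relies on $p$ being odd (so that one may complete the square and $\chi$ is the quadratic character) and on $\chi(a_{2})^{2}=1$, which holds since $a_{2}\neq0$. Alternatively, one may simply quote \cite{FF} as stated.
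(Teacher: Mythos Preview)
Your proof is correct and is precisely the standard argument behind \cite[Theorem~5.48]{FF}; the paper itself does not supply a proof of this lemma but simply cites that reference, so there is nothing further to compare.
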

For $\deg(f(x))\geq 3$, it is challenging to derive an explicit a general formula for the character sum $\sum\limits_{x\in \gf_{p^{n}}}\chi(f(x))$. However, when $\deg(f(x))=3$, such a sum can be computed by considering $\gf_{p^{n}}$-rational points of elliptic curves over $\gf_{p}$. More specifically, for a cubic function $f$, we denote $\Gamma _{p,n}$ as
\begin{equation}
	\label{Gamma}
	\Gamma _{p,n}=\sum_{x\in \gf_{p^{n}}}\chi(f(x)).
\end{equation}

To evaluate  $\Gamma _{p,n}$, we shall use some elementary concepts from the theory of elliptic curves. Most of the terminologies and notation are  borrowed from \cite{SilveEC}. Let $E/\gf_{p}$ be the elliptic curve over $\gf_{p}$:
\begin{equation*}
	E:y^{2}=f(x).
\end{equation*}
Let $N_{p,n}$ denote the number of $\gf_{p^{n}}$-rational points (remember the extra point at infinity) on the curve $E/\gf_{p}$. From Subsection 1.3 in [\cite{SilveEC}, P. 139, Chap. V] and Theorem 2.3.1 in [\cite{SilveEC}, P. 142, Chap. V], $N_{p,n}$  can be computed from $\Gamma _{p,n}$. More precisely,  for every $n\geq1$,
\begin{equation*}
	N_{p,n}=p^n+1+\Gamma _{p,n}.
\end{equation*}
Moreover, 
\begin{equation}\label{gamma}
	\Gamma _{p,n}=-\alpha^{n}-\beta^{n},
\end{equation}
where $\alpha$ and $\beta$ are the complex solutions of the quadratic equation $T^{2}+\Gamma_{p,1}T+p=0$.

We  are now interested in two specific character sums $\lambda^{(1)}_{p,n}$ and $\lambda^{(2)}_{p,n}$. Let
	\begin{align}\label{sum1}
		\lambda^{(1)}_{p,n}&=\sum_{x\in \gf_{p^{n}}}\chi(x(x+1)(x-3))
	\end{align}
	and
	\begin{align}\label{sum2}
		\lambda^{(2)}_{p,n}&=\sum_{x\in \gf_{p^{n}}}\chi(x(x+1)(x-2)).
	\end{align}

 The former sums will be helpful in Section \ref{B}  since, as we shall see, the computation of the differential spectrum of  $F(x)=x^{\frac{p^{n}+3}{2}}$ over $\gf_{{p^{n}}}$ boils down to evaluating those character sums.

In the following examples, we give the exact values of $\lambda^{(1)}_{p,n}$ and $\lambda^{(2)}_{p,n}$, 
respectively,  over prime fields (for specific values of $p$).
\begin{example}
Let $p=7$. For $n=1$, one has $\lambda^{(1)}_{7,1}=0$. The quadratic equation $T^{2}+7=0$ has two complex roots $\pm\sqrt{-7}$. By \eqref{gamma},  we have
		\begin{align*}
			\lambda^{(1)}_{7,n}&=-(\sqrt{-7})^{n}-(-\sqrt{-7})^{n}\\&=\left\{\begin{array}{ll}
				(-1)^{\frac{n}{2}+1}\cdot 2\cdot {7}^{\frac{n}{2}}, & ~\mathrm{n}~ \mathrm{is}~\mathrm{even},\\
				0, & ~\mathrm{n}~ \mathrm{is}~\mathrm{odd},
			\end{array}\right.
		\end{align*}
		Moreover,  when $n=1$, $\lambda^{(2)}_{7,1}=-4$. The quadratic equation $T^{2}-4T+7=0$ has two complex roots $2\pm \sqrt{-3}$. Similarly, we have 
		\begin{align*}
			\lambda^{(2)}_{7,n}&=-(2+\sqrt{-3})^{n}-(2-\sqrt{-3})^{n}\\&=\sum\limits_{k = 0}^{\left\lfloor {\frac{n}{2}} \right\rfloor } {{{\left( { - 1} \right)}^{k+1}} \binom{n}{2k} {2^{n-2k+1}}\cdot{3^{k}}}.
		\end{align*}
\end{example}
In addition, we have the following bound on $\lambda^{(i)}_{p,n}$ for $i=1,2$.
	\begin{theorem}[\cite{SilveEC}, Corollary 1.4]
		\label{bound}
		Keep the notation as above. Then we have  $|\lambda^{(i)}_{p,n}|\leq 2p^{\frac{n}{2}}$,
		for $i=1,2$.
	\end{theorem}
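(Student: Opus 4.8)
The plan is to deduce the bound from the elliptic-curve interpretation of these sums already recorded in \eqref{gamma}. Fix $i\in\{1,2\}$ and let $f(x)$ be the associated cubic, namely $f(x)=x(x+1)(x-3)$ when $i=1$ and $f(x)=x(x+1)(x-2)$ when $i=2$, so that $\lambda^{(i)}_{p,n}=\Gamma_{p,n}$ for this choice of $f$. First I would verify that $E:y^{2}=f(x)$ is a genuine (nonsingular) elliptic curve over $\gf_p$, which amounts to checking that $f$ has three pairwise distinct roots in $\overline{\gf_p}$: for $i=1$ these roots are $0,-1,3$, and for $i=2$ they are $0,-1,2$, so as long as $p$ is odd and $p\neq 3$ (the standing hypothesis) the three roots are distinct and $E$ is smooth.

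Next, by \eqref{gamma} we have $\lambda^{(i)}_{p,n}=-(\alpha^{n}+\beta^{n})$, where $\alpha,\beta\in\mathbb{C}$ are the two roots of $T^{2}+\Gamma_{p,1}T+p=0$; in particular $\alpha\beta=p$ and $\alpha+\beta=-\Gamma_{p,1}$. The key input is the Hasse bound in the base case $n=1$, namely $|\Gamma_{p,1}|\le 2\sqrt{p}$ (equivalently, $\#E(\gf_p)$ differs from $p+1$ by at most $2\sqrt{p}$). This forces the discriminant $\Gamma_{p,1}^{2}-4p$ to be $\le 0$, so $\alpha$ and $\beta$ are complex conjugates (or a real double root) with $|\alpha|=|\beta|=\sqrt{\alpha\beta}=\sqrt{p}$. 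Hence
\[
\bigl|\lambda^{(i)}_{p,n}\bigr|=\bigl|\alpha^{n}+\beta^{n}\bigr|\le |\alpha|^{n}+|\beta|^{n}=2\,p^{n/2},
\]
which is the desired estimate for both $i=1$ and $i=2$.

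The only genuine obstacle is the base case $|\Gamma_{p,1}|\le 2\sqrt{p}$; everything after that is the elementary observation that $\alpha^{n}+\beta^{n}$ stays controlled once $|\alpha|=|\beta|=\sqrt{p}$. I would simply invoke this from \cite{SilveEC} (it is the Hasse bound, equivalently the Weil bound $\bigl|\sum_{x\in\gf_p}\chi(f(x))\bigr|\le(\deg f-1)\sqrt{p}$ for squarefree $f$ of degree $3$). If one prefers a self-contained argument, the standard route is via the Frobenius endomorphism $\phi$ of $E$ over $\overline{\gf_p}$: $\phi$ satisfies $\phi^{2}-\Gamma_{p,1}\phi+p=0$ in $\mathrm{End}(E)$, and positivity of the degree quadratic form on $\mathrm{End}(E)$, applied to $m+k\phi$ for all integers $m,k$, yields $\Gamma_{p,1}^{2}\le 4p$.
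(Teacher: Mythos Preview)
Your argument is correct and is exactly the standard route: smoothness of $E$ over $\gf_p$ (which you correctly verify for $p$ odd and $p\neq 3$), Hasse's bound $|\Gamma_{p,1}|\le 2\sqrt{p}$, and then the observation that the roots $\alpha,\beta$ of $T^{2}+\Gamma_{p,1}T+p$ have absolute value $\sqrt{p}$, whence $|\alpha^{n}+\beta^{n}|\le 2p^{n/2}$ via \eqref{gamma}. The paper itself offers no proof of this theorem: it simply quotes the bound from \cite{SilveEC} after having set up precisely the framework (the curve $E$, the identity \eqref{gamma}) that your argument uses, so your proposal is the natural expansion of what the paper leaves as a citation.
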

	%In this paper, we are mainly concerned about the following quadratic character sum with $\deg(f(x))=4$, which is related to $\mu$.
	%We define 
	%\begin{align}\label{sum}
	%\lambda_{p,n}=9\lambda_{p,n}^{(1)}+6\lambda_{p,n}^{(2)}=9\sum_{x\in \gf_{p^{n}}}\chi(x(x+1)(x-3))+6\sum_{x\in \gf_{p^{n}}}\chi(x(x+1)(x-2)).
	%\end{align}
We present the following results concerning the exact values of six specific character sums used in Section \ref{B}.
	\begin{lemma}\label{sums}
		Let $p^n \equiv 3 (\mathrm{mod}~4)$, we have
		\begin{align*}
			&1)\sum_{x\in \gf_{p^{n}}}\chi\big{(}x(x^2+x+1)\big{)}=\lambda^{(1)}_{p,n},\\
			&2) \sum_{x\in \gf_{p^{n}}}\chi\big{(}(x+1)(x^2+x+1)\big{)}=-\lambda^{(1)}_{p,n},\\
			&3)\sum_{x\in \gf_{{p^{n}}}}\chi\big{(}(x^2+x)(x^2+x+1)\big{)}=-\lambda^{(1)}_{p,n}-1.
		\end{align*}
		Moreover, when $p^n \equiv 3 (\mathrm{mod}~4)$ and $p\neq 3$, we have
		\begin{align*}
			&4)\sum_{x\in \gf_{p^{n}}}\chi\big{(}x(3x^2+2x+3)\big{)}=\lambda^{(2)}_{p,n},\\
			&5)\sum_{x\in \gf_{{p^{n}}}}\chi\big{(}(x^2+x+1)(3x^2+2x+3)\big{)}=\lambda^{(2)}_{p,n}-\chi(3),\\
			&6)\sum_{x\in \gf_{p^{n}}}\chi\big{(}x(x^2+x+1)(3x^2+2x+3)\big{)}=-2\lambda^{(1)}_{p,n},
		\end{align*}
	where $\lambda^{(1)}_{p,n}$ and $\lambda^{(2)}_{p,n}$ are defined in \eqref{sum1} and  \eqref{sum2}, respectively.
	\end{lemma}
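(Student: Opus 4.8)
The plan is to reduce each of the six sums to one of the two reference sums $\lambda^{(1)}_{p,n}$ and $\lambda^{(2)}_{p,n}$ by means of a single well‑chosen variable substitution, exploiting the multiplicativity of $\chi$ together with the hypothesis $p^n\equiv 3\pmod 4$, which forces $\chi(-1)=-1$. For parts 1)–3), the key observation is that $x^2+x+1$ is, up to the affine change $x\mapsto x+\tfrac12$, a scalar multiple of a ``$t^2+c$'' shape; more precisely, completing the square gives $x^2+x+1=(x+\tfrac12)^2+\tfrac34$. A cleaner route, however, is to notice that $x(x^2+x+1)=\tfrac{x(x^3-1)}{x-1}$ when $x\neq 1$, i.e. the cubic $x(x^2+x+1)$ has roots $0$ and the two primitive cube roots of unity. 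I would instead substitute $x\mapsto -x-1$ (an involution sending $0\leftrightarrow -1$, and fixing $x^2+x+1$): since $(-x-1)\big((-x-1)^2+(-x-1)+1\big)=-(x+1)(x^2+x+1)$ and $\chi(-1)=-1$, part 1) and part 2) are exchanged, which already proves $2)=-1)$. To evaluate $1)$ itself and part 3), I would apply the substitution $x\mapsto \frac{1}{x}$ or a fractional‑linear map carrying the root set $\{0,\zeta_3,\zeta_3^2\}$ of $x(x^2+x+1)$ onto $\{0,1,-3\}$ — up to a square factor absorbed by $\chi$ — thereby identifying $1)$ with $\lambda^{(1)}_{p,n}$ as defined in \eqref{sum1}. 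For part 3), writing $(x^2+x)(x^2+x+1)=(x^2+x)\big((x^2+x)+1\big)$ suggests setting $y=x^2+x$, but since that map is $2$‑to‑$1$ one must instead expand the quartic, factor it as $x(x+1)(x^2+x+1)$, and reuse the cube‑root substitution; the extra $-1$ comes from the value at $x=0$ (or from the standard correction term when passing between a quartic character sum and the associated elliptic count).

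For parts 4)–6), the relevant cubic/quartic now involves the second factor $3x^2+2x+3$, whose discriminant is $4-36=-32$, and whose roots are ``conjugate'' to those of $x^2+x+1$ under an affine map defined over $\gf_p$ (here is where $p\neq 3$ is needed, so that $3$ is invertible and $3x^2+2x+3$ is genuinely quadratic). For part 4), I would substitute $x\mapsto c x$ (with $c$ chosen so that $3c^2 x^2+2cx+3$ becomes a scalar times $x^2+x+1$, which needs $3c^2=3$ forcing $c=\pm1$ — this does not quite work, so instead) complete the square and perform the fractional‑linear change mapping the root set of $x(3x^2+2x+3)$ onto $\{0,1,-2\}$ up to squares, yielding $\lambda^{(2)}_{p,n}$ as in \eqref{sum2}. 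Part 5) is handled the same way, the $-\chi(3)$ term arising as the value $\chi\big((x^2+x+1)(3x^2+2x+3)\big)$ contributes: one checks that the two quadratics share no common root (again using $p\neq 3$), and the correction $\chi(3)$ is the leading‑coefficient factor that Lemma \ref{2}‑type bookkeeping produces when one normalizes $3x^2+2x+3$ to be monic. Part 6) is the degree‑$5$ sum; here the product $x(x^2+x+1)(3x^2+2x+3)$ should be related, after a substitution, to \emph{two} copies of the curve underlying $\lambda^{(1)}_{p,n}$ — the factor $2$ in $-2\lambda^{(1)}_{p,n}$ strongly suggests that the quintic splits (over $\gf_{p^n}$, or after a quadratic twist that $\chi$ sees as trivial because $p^n\equiv 3\pmod4$) into the union of two cubics each $\chi$‑equivalent to $x(x+1)(x-3)$, so the sum is the sum of two equal $\lambda^{(1)}_{p,n}$'s, with a sign from $\chi$ of the substitution's Jacobian‑type factor.

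The main obstacle I anticipate is part 6): a degree‑$5$ character sum does not reduce to a single elliptic curve, so the claimed identity $=-2\lambda^{(1)}_{p,n}$ must come from a genuine factorization phenomenon rather than a bijective substitution. The work will be to produce the explicit rational change of variable (or pair of changes of variable, partitioning $\gf_{p^n}$) that exhibits the quintic's sum as two copies of $\lambda^{(1)}_{p,n}$; verifying that the pieces are defined over $\gf_{p^n}$ and that the twisting/scaling factors are squares (which is exactly where $p^n\equiv 3\pmod 4$ enters, via $\chi(-1)=-1$) is the delicate point. A secondary, more routine difficulty is tracking the additive correction constants ($-1$ in part 3), $-\chi(3)$ in part 5)): these are finite‑point discrepancies between a plane‑curve point count and the naive character sum, and each must be pinned down by evaluating the summand at the finitely many exceptional $x$ (roots of denominators, points where a substitution degenerates), which I would do by direct substitution. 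Throughout, I would record explicitly at each step the use of $\chi(ab)=\chi(a)\chi(b)$, $\chi(t^2)=1$ for $t\neq0$, and $\chi(-1)=-1$.
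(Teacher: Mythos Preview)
Your proposal correctly identifies the trivial part, namely that $x\mapsto -x-1$ together with $\chi(-1)=-1$ gives 2) from 1); this is exactly what the paper does. Everything else, however, rests on a technique that is not the one that actually works here, and in several places would fail.

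The central issue is your ``root--matching'' strategy: you repeatedly propose to find a fractional--linear (or affine) substitution sending the zero set of the cubic or quintic at hand onto $\{0,-1,3\}$ or $\{0,-1,2\}$. But the quadratic factors $x^2+x+1$ and $3x^2+2x+3$ need not split over $\gf_{p^n}$ (their discriminants are $-3$ and $-32$), so their roots may lie only in $\gf_{p^{2n}}$, and no $\gf_{p^n}$-rational map can carry them to the rational points $-1,3$ (resp.\ $-1,2$). Even when the roots are rational, matching zero sets only identifies the two elliptic curves up to a quadratic twist, and you would still need to argue that the twist is trivial; none of this is addressed. The paper bypasses all of this with a different and uniform device: for a sum of the form $\sum_x\chi\bigl(L(x)\,Q(x)\bigr)$ with $L$ linear and $Q$ quadratic, write $\chi\bigl(L(x)Q(x)\bigr)=\chi\bigl(Q(x)/L(x)\bigr)$ for $L(x)\neq 0$, set $u=Q(x)/L(x)$, and observe that for each $u$ the equation $Q(x)-uL(x)=0$ is quadratic in $x$ with exactly $1+\chi(\Delta(u))$ solutions. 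This turns the original sum into $\sum_u\chi(u)\bigl(1+\chi(\Delta(u))\bigr)=\sum_u\chi\bigl(u\,\Delta(u)\bigr)$, which is a genuine cubic character sum with \emph{rational} roots that one can then shift and scale onto $x(x+1)(x-3)$ or $x(x+1)(x-2)$. This is how parts 1), 4), and 5) are done (for 5) one sets $u=\dfrac{3x^2+2x+3}{x^2+x+1}$; the $-\chi(3)$ is the contribution of the exceptional value $u=3$, i.e.\ $x=0$).

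Two further specific points. For part 3) you explicitly discard the substitution $u=x^2+x$ because it is $2$-to-$1$; in fact this is precisely the substitution the paper uses, again exploiting that each $u$ has $1+\chi(4u+1)$ preimages, which produces $\sum_u\chi\bigl(u(u+1)\bigr)+\sum_u\chi\bigl(u(u+1)(4u+1)\bigr)=-1-\lambda^{(1)}_{p,n}$ after Lemma~\ref{2} and an affine change. For part 6) your guess that the quintic ``splits into two cubics'' is not how the argument goes: the paper again sets $u=\dfrac{x^2+x+1}{x}$, and the key identity $(x+1)^2=(u+1)x$ (valid for $x\neq 0,-1$) gives $\chi(x)=\chi(u+1)$ and hence $\chi(3x^2+2x+3)=\chi\bigl((3u-1)x\bigr)=\chi\bigl((3u-1)(u+1)\bigr)$; the quintic sum then collapses to the sum of two cubic character sums in $u$, each of which is shown (by affine scaling) to equal $-\lambda^{(1)}_{p,n}$. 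So the missing idea throughout is this ``fiber over a rational quotient, count preimages by $1+\chi(\Delta)$'' trick; once you have it, all six parts are short computations.
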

	\begin{proof}
	\begin{enumerate}
		\item We have
		\begin{align*}
			\sum_{x\in \gf_{p^{n}}}\chi\big{(}x(x^2+x+1)\big{)}&=\sum_{x\in\gf_{p^n}^* }\chi\big{(}x(x^2+x+1)\big{)}\\&=\sum_{x\in\gf_{p^n}^* }\chi\big{(}\frac{x^2+x+1}{x}\big{)}.
		\end{align*}
		Let $\frac{x^2+x+1}{x}=u$, then $x$ and $u$ satisfy
		\begin{equation}\label{u-xrelation1}
			x^2+(1-u)x+1=0,
		\end{equation}
		which is a quadratic equation on $x$ with the discriminant $\Delta=(u+1)(u-3)$. For each $u\in\gf_{{p^{n}}}$, it corresponds $(1+\chi(\Delta))$ $x$'s from \eqref{u-xrelation1}. Then we obtain
		\begin{align*}
			\sum_{x\in\gf_{p^n}^*}\chi\big{(}\frac{x^2+x+1}{x}\big{)}&=\sum_{u\in\gf_{p^n}}\chi(u)\big{(}1+\chi((u+1)(u-3))\big{)}\\&=\sum_{u\in\gf_{p^n}}\chi\big{(}u(u+1)(u-3)\big{)}\\
			&=\lambda^{(1)}_{p,n}.
		\end{align*}
		The desired result follows.
		
		\item Note that $-x-1$ permutes $\gf_{{p^n}}$ and set $y=-x-1$. Then 
		\begin{align*}
			\sum_{x\in \gf_{p^{n}}}\chi\big{(}(x+1)(x^2+x+1)\big{)}&=\sum_{y\in\gf_{p^n}}\chi\big{(}(-y)(y^{2}+y+1)\big{)}\\&=\chi(-1)\sum_{y\in\gf_{p^n}}\chi\big{(}y(y^{2}+y+1)\big{)}\\&=-\lambda^{(1)}_{p,n}
		\end{align*}
		since $\chi(-1)=-1$. 
		
		%And it is obvious that $\sum\limits_{x\in \gf_{p^{n}}}\chi(x(x-1)(x+3))=-\sum\limits_{x\in \gf_{p^{n}}}\chi(x(x+1)(x-3))=-\lambda^{(1)}$ since $\chi(-1)=-1$.
		
		 %It is obvious that 
		%\[\sum_{x\in \gf_{p^{n}}}\chi((x^2+x+1)(x^2+x))=\sum_{x^2+x\ne0 }\chi((x^2+x+1)(x^2+x))=\sum_{x^2+x\ne0 }\chi(\frac{x^2+x+1}{x^2+x}).\]
		\item We consider the character sum $\sum\limits_{x\in \gf_{{p^{n}}}}\chi\big{(}(x^2+x)(x^2+x+1)\big{)}$.  Set $x^2+x=u$. Then $x$ and $u$ satisfy 
		\begin{equation}\label{u-xrelation3}
			x^2+x-u=0,
		\end{equation}
		which is a quadratic equation on $x$ with the discriminant $\Delta=4u+1$. For each $u\in\gf_{{p^{n}}}$, it corresponds $\big{(}1+\chi(4u+1)\big{)}$ $x$'s from \eqref{u-xrelation3}. Then we obtain	
		\begin{align*}
			\sum_{x\in\gf_{p^n}}\chi\big{(}x(x+1)(x^2+x+1)\big{)}=&\sum_{u\in\gf_{p^n}}\chi\big{(}u(u+1))(1+\chi(4u+1)\big{)}\\=&\sum_{u\in\gf_{p^n}}\chi\big{(}u(u+1)\big{)}+\sum_{u\in\gf_{p^n}}\chi\big{(}u(u+1)(4u+1)\big{)}.
		\end{align*}
		Note that $\sum\limits_{u\in\gf_{p^n}}\chi\big{(}u(u+1)\big{)}=-1$ by Lemma \ref{2} and
		\begin{align*}
			\sum_{u\in\gf_{p^n}}\chi\big{(}u(u+1)(4u+1)\big{)}=&\sum_{u\in\gf_{p^n}}\chi\big{(}4u(4u+4)(4u+1)\big{)}\\=&\sum_{v\in\gf_{p^n}}\chi\big{(}(-v-1)(-v+3)(-v)\big{)}\\
			=&\chi(-1)\sum_{v\in\gf_{p^n}}\chi\big{(}v(v+1)(v-3)\big{)}\\
			=&-\lambda^{(1)}_{p,n}.
		\end{align*}
		The desired result follows.

		\item  Clearly, note  that if $p\ne3$ then
		\begin{align*}
			\sum_{x\in \gf_{p^{n}}}\chi\big{(}x(3x^2+2x+3)\big{)}&=\sum_{x\in \gf_{p^{n}}^{*} }\chi\big{(}x(3x^2+2x+3)\big{)}\\&=\sum_{x\in \gf_{p^{n}}^{*} }\chi\big{(}\frac{3x^2+2x+3}{x}\big{)}.
		\end{align*}
		Let $\frac{3x^2+2x+3}{x}=u$, then $x$ and $u$ satisfy
		\begin{equation}\label{u-xrelation4}
			3x^2+(2-u)x+3=0,
		\end{equation}
		which is a quadratic equation on $x$ with the discriminant $\Delta=(u+4)(u-8)$. For each $u\in\gf_{{p^{n}}}$, it corresponds $\big{(}1+\chi(\Delta)\big{)}$ $x$'s from \eqref{u-xrelation4}. Then we obtain
		\begin{align*}
			\sum_{x\in\gf_{p^n}^*}\chi\big{(}\frac{3x^2+2x+3}{x}\big{)}&=\sum_{u\in\gf_{p^n}}\chi(u)\big{(}1+\chi((u+4)(u-8))\big{)}\\&=\sum_{u\in\gf_{p^n}}\chi\big{(}u(u+4)(u-8)\big{)}.
		\end{align*}
		Note that 
		\begin{align*}
			\sum_{u\in\gf_{{p^n}}}\chi\big{(}u(u+4)(u-8)\big{)}&=\sum_{u\in\gf_{{p^n}}}\chi\big{(}\frac{u}{4}(\frac{u}{4}+1)(\frac{u}{4}-2)\big{)}\\&=\sum_{v\in\gf_{{p^n}}}\chi\big{(}v(v+1)(v-2)\big{)}=\lambda^{(2)}_{p,n},
		\end{align*}
		the desired result follows.
		
		\item It is obvious that 
		\begin{align*}
			\sum_{x\in \gf_{p^{n}}}\chi\big{(}(x^2+x+1)(3x^2+2x+3)\big{)}&=\sum_{x^2+x+1\ne0 }\chi\big{(}(x^2+x+1)(3x^2+2x+3)\big{)}\\&=\sum_{x^2+x+1\ne0 }\chi\big{(}\frac{3x^2+2x+3}{x^2+x+1}\big{)}.
		\end{align*}
		Let $\frac{3x^2+2x+3}{x^2+x+1}=u$. It is easy to see that $x=0$ if and only if $u=3$. Moreover, $u$ and $x$ satisfy 
		\begin{equation}\label{u-xrelation5}
			(3-u)x^2+(2-u)x+(3-u)=0.
		\end{equation}
		 When $u\ne3$, \eqref{u-xrelation5} is a quadratic equation on the variable $x$, whose discriminant is $\Delta=-(3u-8)(u-4)$. For each $u\neq 3$, it corresponds $\big{(}1+\chi(\Delta)\big{)}$ $x$'s from \eqref{u-xrelation5}. Then we obtain
		
		\begin{align*}
			\sum_{x^2+x+1\ne0 }\chi\big{(}\frac{3x^2+2x+3}{x^2+x+1}\big{)}=&\sum_{x=0 }\chi\big{(}\frac{3x^2+2x+3}{x^2+x+1}\big{)}+\sum_{u\neq 3}\chi(u)\big{(}1+\chi(-(3u-8)(u-4))\big{)}
			\\=&~\chi(3)+\sum_{u\in\gf_{p^n}}\chi(u)\big{(}1+\chi(-(3u-8)(u-4))\big{)}-2\chi(3)\\=&-\chi(3)-\sum_{u\in\gf_{p^n}}\chi\big{(}u(3u-8)(u-4)\big{)}.
		\end{align*}
		Note that when $p\neq 3$,
		\begin{align*}
			\sum_{u\in \gf_{p^{n}}}\chi\big{(}u(3u-8)(u-4)\big{)}&=\sum_{u\in \gf_{p^{n}}}\chi\big{(}3u(3u-8)(3u-12)\big{)}=\sum_{v\in \gf_{p^{n}}}\chi\big{(}v(v-8)(v-12)\big{)}\\&=\sum_{v\in \gf_{p^{n}}}\chi\big{(}\frac{v}{4}(\frac{v}{4}-2)(\frac{v}{4}-3)\big{)}=\sum_{w\in \gf_{p^{n}}}\chi\big{(}(-w+2)(-w)(-w-1)\big{)}\\&=-\sum_{w\in \gf_{p^{n}}}\chi\big{(}w(w+1)(w-2)\big{)}=-\lambda^{(2)}_{p,n}.
		\end{align*}
		
		\item First observe that
		\begin{align*}
			\sum_{x\in \gf_{p^{n}}}\chi\big{(}x(x^2+x+1)(3x^2+2x+3)\big{)}=\sum_{x\in \gf_{p^{n}}^{*}}\chi\big{(}(\frac{x^2+x+1}{x})(3x^2+2x+3)\big{)}.
		\end{align*}
		Let $\frac{x^2+x+1}{x}=u$, then $x$ and $u$ satisfy
		\begin{equation}\label{u-xrelation6}
			x^2+(1-u)x+1=0,
		\end{equation}
		which is a quadratic equation on $x$ with $\Delta=(u+1)(u-3)$. For each $u\in\gf_{{p^{n}}}$, it corresponds $\big{(}1+\chi(\Delta)\big{)}$ $x$'s from \eqref{u-xrelation6}.
		We mention that $u=-1$ if and only if $x=-1$, for $x\neq0,-1$, we have $(x+1)^2=(u+1)x$, then $\chi(x)=\chi(u+1)$. Moreover, $\chi(3x^2+2x+3)=\chi((3u-1)x)=\chi\big{(}(3u-1)(u+1)\big{)}$.
		We obtain
		\begin{align*}
			\sum_{x\in\gf_{{p^n}}^{*} }\chi\big{(}(\frac{x^2+x+1}{x})(3x^2+2x+3)\big{)}&=\sum_{x=-1 }\chi\big{(}(\frac{x^2+x+1}{x})(3x^2+2x+3)\big{)}\\&~~~~~+\sum_{u\neq -1}\chi\big{(}u(3u-1)(u+1)\big{)}\big{(}1+\chi((u+1)(u-3))\big{)}\\&=-1+\sum_{u\neq -1}\chi\big{(}u(3u-1)(u+1)\big{)}+\sum_{u\neq -1}\chi\big{(}u(3u-1)(u-3)\big{)}\\&=\sum_{u\in\gf_{p^n}}\chi\big{(}u(3u-1)(u+1)\big{)}+\sum_{u\in\gf_{p^n}}\chi\big{(}u(3u-1)(u-3)\big{)}.
		\end{align*}
		Since $p\neq 3$, we have
		\begin{align*}
			\sum\limits_{u\in\gf_{{p^{n}}}}\chi\big{(}u(u+1)(3u-1)\big{)}&=\sum\limits_{u\in\gf_{{p^{n}}}}\chi\big{(}3u(3u+3)(3u-1)\big{)}\\&=\sum\limits_{v\in\gf_{{p^{n}}}}\chi\big{(}(-v)(-v+3)(-v-1)\big{)}\\
			&=-\lambda^{(1)}_{p,n}.
		\end{align*}
		Moreover,
		\begin{align*}
			\sum\limits_{u\in\gf_{{p^{n}}}}\chi\big{(}u(u-3)(3u-1)\big{)}&=\sum\limits_{u\in\gf_{{p^{n}}}}\chi\big{(}3u(3u-9)(3u-1)\big{)}\\&=\sum\limits_{v\in\gf_{{p^{n}}}}\chi\big{(}v(v-9)(v-1)\big{)}\\
			&=\sum\limits_{v\in\gf_{p^{n}}^{*}}\chi\big{(}\frac{(v-9)(v-1)}{v}\big{)}.
		\end{align*} 
		For $v\neq 0$,  let $\frac{v^2-10v+9}{v}=w$, for each $w\in\gf_{p^n}$, it corresponds $\big{(}1+\chi((w+4)(w+16))\big{)}$ $v$'s. Then we obtain
		\begin{align*}
			\sum_{v\in\gf_{p^{n}}^{*}}\chi\big{(}v(v-9)(v-1)\big{)}&=\sum_{w\in\gf_{{p^{n}}}}\chi(w)\big{(}1+\chi((w+4)(w+16))\big{)}\\
			&=\sum_{w\in\gf_{{p^{n}}}}\chi\big{(}w(w+4)(w+16)\big{)}\\&=\sum_{w\in\gf_{{p^{n}}}}\chi\big{(}\frac{w}{4}(\frac{w}{4}+1)(\frac{w}{4}+4)\big{)}\\&=\sum_{x\in\gf_{{p^{n}}}}\chi\big{(}(-x-1)(-x)(-x+3)\big{)}\\&=-\lambda^{(1)}_{p,n}.
		\end{align*}
		We conclude that $\sum\limits_{x\in\gf_{{p^n}}}\chi\big{(}x(3x^2+2x+3)(x^2+x+1)\big{)}=-2\lambda^{(1)}_{p,n}$.
		\end{enumerate}
	\end{proof}
	
	\section{On the number of solutions of certain equation system}\label{B}
	
	This section aimes to determine the  number of solutions  in $(\gf_{{p^n}})^4$ of  the equation system \eqref{equsystem}. To this end, our first objective is to study two related equation systems with solutions in  $(\gf_{p^n}^*)^3$, which will help us achieve our goal.
	\begin{theorem}\label{N(1,1,1)value}
		Let $p^n \equiv 3 (\mathrm{mod}~4)$ and $p\neq 3$. Let $N_{(i,j,k)}$ denote the number of solutions $(y_{1},y_{2},y_{3})\in (\gf_{p^n}^*)^3$ of the equation system
		\begin{equation}\label{N(1,1,1)solution}
			\left\{ \begin{array}{ll}
				y_{1} + y_{2} + y_{3} + 1 &= 0\\
				y_{1}^2+ y_{2}^2+ y_{3}^2 + 1 &= 0
			\end{array} \right.
		\end{equation}
		when $(\chi(y_1), \chi(y_2), \chi(y_3))=(i,j,k)$, $i, j, k\in \{\pm1\}$. We have
		\[N_{(1,1,1)}=\frac{1}{8}\big{(}p^n+3\lambda^{(1)}_{p,n}-6\lambda^{(2)}_{p,n}-15-16\chi(-3)\big{)},\]
		\[N_{(1,1,-1)}=N_{(1,-1,1)}=N_{(-1,1,1)}=N_{(-1,-1,-1)}=\frac{1}{8}\big{(}p^n-3\lambda^{(1)}_{p,n}-3-4\chi(-3)\big{)},\]
		and
		\[N_{(1,-1,-1)}=N_{(-1,1,-1)}=N_{(-1,-1,1)}=\frac{1}{8}(p^n+3\lambda^{(1)}_{p,n}+2\lambda^{(2)}_{p,n}+1),\]
		where $\lambda^{(1)}_{p,n}$ and $\lambda^{(2)}_{p,n}$ are defined in \eqref{sum1} and \eqref{sum2}, respectively.
	\end{theorem}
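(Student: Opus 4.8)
The plan is to eliminate $y_3$, reduce \eqref{N(1,1,1)solution} to a single conic, and then extract each sign class by the usual quadratic-character indicator trick, so that everything funnels into the six normalized character sums of Lemma \ref{sums}.

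\smallskip
\noindent\emph{Step 1: reduction and the indicator expansion.} Substituting $y_3=-1-y_1-y_2$ into the second equation of \eqref{N(1,1,1)solution}, it collapses to the conic
\[
\mathcal C:\ y_1^2+y_1y_2+y_2^2+y_1+y_2+1=0 ,
\]
while the condition $(y_1,y_2,y_3)\in(\gf_{p^n}^*)^3$ becomes $y_1\neq0$, $y_2\neq0$, $y_1+y_2+1\neq0$. For $y\in\gf_{p^n}^*$ the quantity $\tfrac{1+i\chi(y)}{2}$ equals $1$ if $\chi(y)=i$ and $0$ otherwise, so
\[
8\,N_{(i,j,k)}=\sum_{\substack{(y_1,y_2,y_3)\\ \text{solution}}}\bigl(1+i\chi(y_1)\bigr)\bigl(1+j\chi(y_2)\bigr)\bigl(1+k\chi(y_3)\bigr).
\]
Expanding and using that \eqref{N(1,1,1)solution} is symmetric in $y_1,y_2,y_3$ (hence the corresponding character sums coincide), one gets $8N_{(i,j,k)}=S_0+(i+j+k)S_1+(ij+ik+jk)S_{12}+ijk\,S_{123}$, where $S_0$ is the number of solutions in $(\gf_{p^n}^*)^3$, $S_1=\sum_{\rm sol}\chi(y_1)$, $S_{12}=\sum_{\rm sol}\chi(y_1y_2)$, and $S_{123}=\sum_{\rm sol}\chi(y_1y_2y_3)$. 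It remains to evaluate these four quantities.

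\smallskip
\noindent\emph{Step 2: $S_0$, $S_1$, $S_{12}$.} Fixing $y_1\neq0$ and viewing $\mathcal C$ as a quadratic in $y_2$ with discriminant $-3y_1^2-2y_1-3$, one obtains $1+\chi(-3y_1^2-2y_1-3)$ pairs, both coordinates nonzero and the constraint $y_1+y_2+1\neq0$ automatic, except when $y_1^2+y_1+1=0$ where every solution degenerates. Summing over $y_1$, subtracting the three degenerate loci $y_1=0$, $y_2=0$, $y_3=0$ (each meeting $\mathcal C$ in $1+\chi(-3)$ points, pairwise disjoint), and applying Lemma \ref{2} to $\sum_{y_1}\chi(-3y_1^2-2y_1-3)$ gives $S_0=p^n-3-4\chi(-3)$. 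Weighting the same count by $\chi(y_1)$ reduces $S_1$ to $\sum_{y_1}\chi\!\bigl(-y_1(3y_1^2+2y_1+3)\bigr)=-\lambda^{(1)}_{p,n}$ — wait, to $-\lambda^{(2)}_{p,n}$ by Lemma \ref{sums}(4) — minus the boundary term $\sum_{y_1^2+y_1+1=0}(\chi(y_1)+1)$, which equals $2(1+\chi(-3))$ because a primitive cube root of unity lying in $\gf_{p^n}$ is a square (here $p\neq3$ and $p^n\equiv1\pmod 3$ force $p^n\equiv1\pmod 6$). For $S_{12}$ I substitute $s=y_2/y_1$, so $\chi(y_1y_2)=\chi(s)$ and $\mathcal C$ becomes $y_1^2(1+s+s^2)+y_1(1+s)+1=0$, with the leading coefficient $1+s+s^2$ vanishing exactly on the degenerate locus; the identical analysis plus Lemma \ref{sums}(4) yields $S_1=S_{12}=-\lambda^{(2)}_{p,n}-2-2\chi(-3)$.

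\smallskip
\noindent\emph{Step 3: $S_{123}$ via a conic identity, and conclusion.} On $\mathcal C$ one has $(y_1+y_2+1)^2=y_1y_2+y_1+y_2$, hence $y_3^2=y_1y_2+y_1+y_2$, and together with $y_1+y_2=-1-y_3$ this gives the crucial identity $y_1y_2=y_3^2+y_3+1$, so $y_1y_2y_3=y_3(y_3^2+y_3+1)$ on every solution. Therefore $S_{123}=\sum_{\rm sol}\chi\!\bigl(y_3(y_3^2+y_3+1)\bigr)$. Fixing $y_3$, the pair $(y_1,y_2)$ consists of the roots of $t^2+(1+y_3)t+(1+y_3+y_3^2)$, giving $1+\chi(-3y_3^2-2y_3-3)$ ordered solutions, and the degenerate case (forced by $y_3^2+y_3+1=0$, and $y_3=0$) contributes nothing to the character sum. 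Expanding and applying Lemma \ref{sums}(1) and (6),
\[
S_{123}=\sum_{y_3}\chi\!\bigl(y_3(y_3^2+y_3+1)\bigr)+\sum_{y_3}\chi\!\bigl(-y_3(y_3^2+y_3+1)(3y_3^2+2y_3+3)\bigr)=\lambda^{(1)}_{p,n}+2\lambda^{(1)}_{p,n}=3\lambda^{(1)}_{p,n}.
\]
Substituting $S_0$, $S_1=S_{12}$, $S_{123}$ into the formula of Step 1 for the three representative sign patterns $(1,1,1)$, $(1,1,-1)$, $(1,-1,-1)$ (and noting $(-1,-1,-1)$ gives the same value as $(1,1,-1)$ since $S_1=S_{12}$) yields exactly the claimed expressions.

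\smallskip
\noindent\emph{Main obstacle.} The substantive part is the degenerate-solution bookkeeping: pinning down precisely when fixing one variable forces another to vanish or makes the relevant quadratic degenerate, and then recasting each resulting cubic character sum into one of the six normalized forms of Lemma \ref{sums} by affine substitutions in the variable. The one genuinely non-mechanical ingredient is the identity $y_1y_2y_3=y_3(y_3^2+y_3+1)$ valid on $\mathcal C$, which is what makes the three-variable sum $S_{123}$ computable at all; the small arithmetic fact $\chi(\omega)=1$ for $\omega$ a cube root of unity in $\gf_{p^n}$ is needed to pin down the recurring boundary terms attached to $t^2+t+1$.
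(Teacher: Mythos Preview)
Your argument is correct. The four symmetric sums $S_0,S_1,S_{12},S_{123}$ evaluate exactly as you claim (in particular, the boundary contributions from $y^2+y+1=0$ are handled correctly once one notes $\chi(\omega)=\chi((\omega+1)^2)=1$ and $-3\omega^2-2\omega-3=\omega$ for a primitive cube root $\omega$), and substituting them into $8N_{(i,j,k)}=S_0+(i{+}j{+}k)S_1+(ij{+}ik{+}jk)S_{12}+ijk\,S_{123}$ reproduces the three stated formulas.

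The paper takes a different route. Rather than expanding the three indicators over the full solution set, it fixes $y_2$, uses that $y_1,y_3$ are the two roots of $t^2+(y_2+1)t+(y_2^2+y_2+1)=0$, and encodes the sign pattern $(\chi(y_1),\chi(y_3))$ via the pair $\bigl(\chi(\Delta),\chi(y_1y_3)\bigr)=\bigl(\chi(-3y_2^2-2y_2-3),\chi(y_2^2+y_2+1)\bigr)$. This gives $\mathcal N_2$ and $\mathcal N_3$ directly as single-variable character sums in $y_2$; expanding each $(1\pm\chi)(1-\chi)(1-\chi)$ product calls on exactly the six sums of Lemma~\ref{sums}. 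The remaining symmetry $N_{(-1,-1,-1)}=N_{(1,1,-1)}$ is obtained not from an identity of character sums but from the self-map $(y_1,y_2,y_3)\mapsto(y_1/y_3,y_2/y_3,1/y_3)$ of the solution set, and $\mathcal N_1$ is then read off from the total count. Your approach is more uniform (all eight values come from one linear formula in $S_0,S_1,S_{12},S_{123}$, and the extra symmetry drops out of the numerical coincidence $S_1=S_{12}$), while the paper's approach is more direct for each individual $\mathcal N_i$ and sidesteps the triple-character sum $S_{123}$ as a separate object. Both hinge on the same conic identity $y_iy_j=y_k^2+y_k+1$ and ultimately feed into the same pieces of Lemma~\ref{sums}.
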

	
	\begin{proof} 
		First we consider that the number of solutions of \eqref{N(1,1,1)solution}. Note that $y_{1}+y_{3}=-(y_{2}+1)$ and $y_{1}y_{3}=\frac{1}{2}((y_1+y_3)^2-(y^2_1+y^2_3))=y_{2}^2+y_{2}+1$. Moreover, $y_{1}$ and $y_{3}$ satisfy the following quadratic equation on the variable $t$
		\begin{equation}\label{N(1,1,1)simple}
			t^{2}+(y_{2}+1)t+y_{2}^2+y_{2}+1=0.
		\end{equation}
		The discriminant of \eqref{N(1,1,1)simple} is $ \Delta=-3y_{2}^2-2y_{2}-3$. For each given $y_2\in\gf_{p^n}$, we have $(1+\chi(\Delta))$ tuples $(y_1,y_2,y_3)$ that satisfy  \eqref{N(1,1,1)solution}. By Lemma \ref{2}, the number of solutions of \eqref{N(1,1,1)solution} is
		\begin{align*}
			\sum\limits_{y_{2}\in \gf_{{p^n}}}\big{(}1+\chi(\Delta)\big{)}&=p^{n}-\chi(-3).
		\end{align*}
		
		For a solution $(y_{1},y_{2},y_{3})\in (\gf_{p^n})^3 $ of \eqref{N(1,1,1)solution}, we consider the case where there exists some $y_i=0$, where $i\in\{1,2,3\}$.  First, it is easy to see that  $(0,0,0)$  cannot be a solution of \eqref{N(1,1,1)solution}. If there are exactly two zeros in  $(y_{1},y_{2},y_{3})$, then \eqref{N(1,1,1)solution} still has no solution. Remaining  the case where there is only one zero in $ (y_{1},y_{2},y_{3}) $, then without loss of generality, we can assume that $y_{3}=0$,  $y_{1}\neq 0$ and, $y_{2}\neq 0$ and they satisfy 
		\begin{equation}\label{equ2}
			\left\{ \begin{array}{ll}
				{y_1} + {y_2} + 1 &= 0\\
				{y^2_1} + {y^2_2} + 1 &= 0,
			\end{array}\right.
		\end{equation}
		we have $y_2=-y_1-1$ and $ y^2_1+y_1+1=0$. The discriminant of the quadratic equation on $y_1$ is $-3$. Then $ y^2_1+y_1+1=0$ has $\big{(}1+\chi(-3)\big{)}$ solutions, which are not $0$ or $-1$. Since $y_2$ is uniquely determined by $y_1$, \eqref{equ2} has $\big{(}1+\chi(-3)\big{)}$ solutions.  We conclude that \eqref{N(1,1,1)solution} has $ 3\big{(}1+\chi(-3)\big{)}$ solutions containing zeros. 
		
		In the following, we consider $ y_i \ne 0 $ for $ 1 \le i \le 3 $.  Recall that $N_{(i,j,k)}$ denotes the number of solutions $(y_{1},y_{2},y_{3})\in (\gf_{p^n}^*)^3$ of equation system \eqref{N(1,1,1)solution} for $(\chi(y_1), \chi(y_2), \chi(y_3))=(i,j,k)$, $i, j, k\in \{\pm1\}$. It is easy to see that $N_{(1,1,-1)}=N_{(1,-1,1)}=N_{(-1,1,1)}$ and $N_{(1,-1,-1)}=N_{(-1,1,-1)}=N_{(-1,-1,1)}$. Note that $(y_{1},y_{2},y_{3})$ is a solution of \eqref{N(1,1,1)solution} if and only if $(\frac{y_{1}}{y_3}, \frac{y_2}{y_3}, \frac{1}{y_3})$ is a solution of \eqref{N(1,1,1)solution}, then $N_{(-1,-1,-1)}=N_{(1,1,-1)}$. For the convenience, we denote by $N_{(1,1,1)}=\mathcal{N}_{1}$, $N_{(1,1,-1)}=\mathcal{N}_{2}$ and $N_{(1,-1,-1)}=\mathcal{N}_{3}$. 
		Similar discussions  as above, we obtain
		\begin{align}\label{N-sum}
			\mathcal{N}_{1}+4\mathcal{N}_{2}+3\mathcal{N}_{3}=p^n-3-4\chi(-3).
		\end{align}
		
		Next we determine $\mathcal{N}_2$, which is the number of solutions $(y_{1},y_{2},y_{3})$ of \eqref{N(1,1,1)solution} with $(\chi(y_1),\chi(y_2),\chi(y_3))=(1,1,-1)$. Recall that $y_1$ and $y_3$ are two solutions of the quadratic equation \eqref{N(1,1,1)simple}. On one hand, if $(y_1,y_2,y_3)$ is a solution of \eqref{N(1,1,1)simple} with $(\chi(y_1),\chi(y_2),\chi(y_3))=(1,1,-1)$, then $\chi(\Delta)=\chi(-3y_{2}^2-2y_{2}-3)=1$ and $\chi(y_1y_3)=\chi(y_{2}^2+y_{2}+1)=-1$. On the other hand, if there exists some $y_2\in\gf_{p^n}^*$ such that $\chi(y_2)=1$, $\chi(-3y_{2}^2-2y_{2}-3)=1$ and $\chi(y_{2}^2+y_{2}+1)=-1$, then \eqref{N(1,1,1)simple} has two solutions and their product is a nonsquare. More precisely, one of the two solutions is a square element (namely $y_1$), and the other one is a nonsquare element (namely $y_3$). We obtain a unique solution of \eqref{N(1,1,1)solution} by the given $y_2$. We therefore conclude that
		\begin{align*}
			\mathcal{N}_2=&\#\big\{y_{2}\in \gf_{p^n}^*: \chi(y_{2})=1, \chi(-3y_{2}^2-2y_{2}-3)=1,\chi(y_{2}^2+y_{2}+1)=-1\big\}\\
			=&\#\big\{y_{2}\in \gf_{p^n}: \chi(y_{2})=1, \chi(3y_{2}^2+2y_{2}+3)=-1,\chi(y_{2}^2+y_{2}+1)=-1\big\}.
		\end{align*}
		By using the character sums, we get
		\begin{align*}
			\mathcal{N}_{2}=&\frac{1}{8}\sum_{\substack{y_{2}\neq 0,\\3y_{2}^{2}+2y_{2}+3\neq0,\\y_{2}^2+y_{2}+1\ne0}}\big{(}1+\chi(y_2)\big{)}\big{(}1-\chi(3y_{2}^2+2y_{2}+3)\big{)}\big{(}1-\chi(y_{2}^2+y_{2}+1)\big{)}\\={}&\frac{1}{8}\sum_{y_{2}\in\gf_{{p^n}}}\big{(}1+\chi(y_2)\big{)}\big{(}1-\chi(3y_{2}^2+2y_{2}+3)\big{)}\big{(}1-\chi(y_{2}^2+y_{2}+1)\big{)}\\&~~~-\frac{1}{8}\big{(}\sum_{y_{2}=0}+\sum_{3y_{2}^2+2y_{2}+3=0}+\sum_{y_{2}^2+y_{2}+1=0}\big{)}\big{(}1+\chi(y_2)\big{)}\big{(}1-\chi(3y_{2}^2+2y_{2}+3)\big{)}\big{(}1-\chi(y_{2}^2+y_{2}+1)\big{)}.
		\end{align*}

		The above identity holds since any two of the three equations $y_2=0$, $3y_{2}^{2}+2y_{2}+3=0$ and $y_{2}^2+y_{2}+1=0$ cannot hold simultaneously. For each $y_2$ satisfies $3y_{2}^{2}+2y_{2}+3=0$, we have $y_2\neq 0,-1$ and $4y_2=3(y_2+1)^2$. Then $\chi(y_2)=\chi(3)$, consequently $\chi(y_2^2+y_2+1)=\chi\big{(}\frac{1}{3}(3y^2_2+2y_2+3)+\frac{1}{3}y_2\big{)}=\chi(\frac{1}{3}y_2)=\chi(\frac{1}{3})\chi(y_2)=1$. We have $\sum\limits_{3y_{2}^{2}+2y_{2}+3=0}\big{(}1+\chi(y_2)\big{)}\big{(}1-\chi(3y_{2}^2+2y_{2}+3)\big{)}\big{(}1-\chi(y_{2}^2+y_{2}+1)\big{)}=0$.\\
		
		For each $y_2$ satisfies $y_2^2+y_2+1=0$, we can discuss similarly. Note that $y_2^2+y_2+1=0$ has solutions in $\gf_{p^n}$ if and only if $\chi(-3)=1$. For each $y_2$ satisfy $y_2^2+y_2+1=0$, we assert that $y_2\neq 0, -1$ and $y_2=(y_2+1)^2$. Then $\chi(y_2)=1$, consequently $\chi(3y_2^2+2y_2+3)=\chi\big{(}3(y_2^2+y_2+1)-y_2\big{)}=\chi(-y_2)=-1$. The number of such $y_2$ is $\big{(}1+\chi(-3)\big{)}$. We have $\sum\limits_{y_2^2+y_2+1=0}\big{(}1+\chi(y_2)\big{)}\big{(}1-\chi(3y_{2}^2+2y_{2}+3)\big{)}\big{(}1-\chi(y_{2}^2+y_{2}+1)\big{)}=4\big{(}1+\chi(-3)\big{)}$.

		%\\={}& \left\{\begin{array}{ll}
		%-1+\frac{1}{8}\sum\limits_{y_2\in\gf_{p^n}}(1+\chi(y_{2}))(1-\chi(3y_{2}^2+2y_{2}+3))(1-\chi(y_{2}^2+y_{2}+1)),& \chi(-3)=1 \\ 
		%\frac{1}{8}\sum\limits_{y_2\in\gf_{p^n}}(1+\chi(y_{2}))(1-\chi(3y_{2}^2+2y_{2}+3))(1-\chi(y_{2}^2+y_{2}+1)), & \chi(-3)=-1.
		%\end{array}\right.
		
		By Lemma \ref{2}, Lemma \ref{sums} and the discussions above, we have
		\begin{align*}
			\mathcal{N}_2=&-\frac{1+\chi(-3)}{2}+\frac{1}{8}\sum_{y_2\in\gf_{p^n}}\big{(}1+\chi(y_{2})\big{)}\big{(}1-\chi(3y_{2}^2+2y_{2}+3)\big{)}\big{(}1-\chi(y_{2}^2+y_{2}+1)\big{)}\\=&-\frac{1+\chi(-3)}{2}+\frac{1}{8}\sum_{y\in\gf_{p^n}}\big{[}1+\chi(y_2)-\chi(y_{2}^{2}+y_{2}+1)-\chi(3y_{2}^{2}+2y_{2}+3)-\chi\big{(}y_2(y^2_2+y_2+1)\big{)}\\&~~~-\chi\big{(}y_2(3y_{2}^{2}+2y_{2}+3)\big{)}+\chi\big{(}(y_{2}^{2}+y_{2}+1)(3y_{2}^{2}+2y_{2}+3)\big{)}-\chi\big{(}y_2(y_{2}^{2}+y_{2}+1)(3y_{2}^{2}+2y_{2}+3)\big{)}\big{]}\\{}=&\frac{1}{8}\big{(}p^n-3\lambda^{(1)}_{p,n}-3-4\chi(-3)\big{)}.
		\end{align*}
		
		We can determine the value of $\mathcal{N}_3$ similarly. We know that $\mathcal{N}_3$ is the number of solutions $(y_{1},y_{2},y_{3})$ of \eqref{N(1,1,1)solution} with $(\chi(y_1),\chi(y_2),\chi(y_3))=(1,-1,-1)$. The only difference is $\chi(y_2)=-1$. By Lemmas \ref{2} and \ref{sums}, we have
		\begin{align*}
			\mathcal{N}_3=&\#\big\{y_{2}\in \gf_{p^n}^*: \chi(y_{2})=-1, \chi(-3y_{2}^2-2y_{2}-3)=1,\chi(y_{2}^2+y_{2}+1)=-1\big\}\\
			=&\#\big\{y_{2}\in \gf_{p^n}: \chi(y_{2})=-1, \chi(3y_{2}^2+2y_{2}+3)=-1,\chi(y_{2}^2+y_{2}+1)=-1\big\}\\
			=&\frac{1}{8}\sum_{\substack{y_{2}\neq 0,\\3y_{2}^{2}+2y_{2}+3\neq0,\\y_{2}^2+y_{2}+1\ne0}}\big{(}1-\chi(y_2)\big{)}\big{(}1-\chi(3y_{2}^2+2y_{2}+3)\big{)}\big{(}1-\chi(y_{2}^2+y_{2}+1)\big{)}\\
			=&\frac{1}{8}\big{(}\sum_{y_{2}\in\gf_{{p^n}}}-\sum_{y_{2}=0}-\sum_{3y_{2}^2+2y_{2}+3=0}-\sum_{y_{2}^2+y_{2}+1=0}\big{)}\big{(}1-\chi(y_2)\big{)}\big{(}1-\chi(3y_{2}^2+2y_{2}+3)\big{)}\big{(}1-\chi(y_{2}^2+y_{2}+1)\big{)}\\
			={}&\frac{1}{8}\sum_{y_2\in\gf_{p^n}}\big{(}1-\chi(y_{2})\big{)}\big{(}1-\chi(3y_{2}^2+2y_{2}+3)\big{)}\big{(}1-\chi(y_{2}^2+y_{2}+1)\big{)}\\
			=&\frac{1}{8}\big{(}p^n+3\lambda^{(1)}_{p,n}+2\lambda^{(2)}_{p,n}+1\big{)}.
		\end{align*}
		
		Finally,   $\mathcal{N}_1$  is uniquely determined  from relation \eqref{N-sum} and the values of $\mathcal{N}_2$ and $\mathcal{N}_3$, which completes  the proof.
	\end{proof}
	
		Now we concentrate on determining the number of solutions  in $(\gf_{p^n}^*)^3$ of the following equation system (\ref{n_4equationsystem}), which plays a crucial role in determining the differential spectrum of $x^{\frac{p^n+3}{2}}$ over $\gf_{p^n}$.
	\begin{theorem}\label{n_4value}
		Let $p^n \equiv 3 (\mathrm{mod}~4)$ and $p\neq 3$. Let $n_4$ denote the number of solutions $(y_1,y_2,y_3)\in(\gf_{p^n}^*)^3$ of the equation system
		\begin{equation}\label{n_4equationsystem}
			\left\{ \begin{array}{ll}
				{y_1} + {y_2} + {y_3} + 1 &= 0\\
				{y^d_1} + {y^d_2} + {y^d_3} + 1 &= 0,
			\end{array} \right.
		\end{equation}
		where $d=\frac{p^n+3}{2}$. Then  we have
		\[n_4=\frac{1}{8}\big{(}29p^n-9\lambda^{(1)}_{p,n}-6\lambda^{(2)}_{p,n}-75-32\chi(-3)\big{)},\]
		where $\lambda^{(1)}_{p,n}$ and $\lambda^{(2)}_{p,n}$ are defined in \eqref{sum1} and \eqref{sum2}, respectively.
	\end{theorem}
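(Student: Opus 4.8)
Write $d=\frac{p^n+3}{2}=\frac{p^n-1}{2}+2$, so that $y^d=\chi(y)\,y^2$ for every $y\in\gf_{p^n}^*$. The plan is to recast \eqref{n_4equationsystem} in terms of the quadratic characters of the coordinates: for $(y_1,y_2,y_3)\in(\gf_{p^n}^*)^3$ the system becomes
\[
y_1+y_2+y_3+1=0,\qquad \chi(y_1)y_1^2+\chi(y_2)y_2^2+\chi(y_3)y_3^2+1=0,
\]
which is symmetric in $y_1,y_2,y_3$. Writing $A_k$ for the number of solutions with exactly $k$ of the $\chi(y_i)$ equal to $1$, we get $n_4=A_0+3A_1+3A_2+A_3$. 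The first key point is that $\phi\colon(y_1,y_2,y_3)\mapsto(y_1/y_3,\,y_2/y_3,\,1/y_3)$ is an involution of the solution set (dividing the first equation by $y_3$ and the second by $y_3^{d}$ rewrites them as the same system for $\phi(y_1,y_2,y_3)$, and $\phi^2=\mathrm{id}$), and since $\chi(y_i/y_3)=\chi(y_i)\chi(y_3)$ it carries the pattern $(+,+,-)$ to $(-,-,-)$. Hence $A_0=A_2$ and $n_4=4A_2+3A_1+A_3$. This is what makes the computation tractable: without it the all-nonsquare pattern would be the genuinely hard one, since there the two ``free'' coordinates always share a quadratic character and no rational parametrisation is available — one would instead have to evaluate $\sum_{y_3}\sum_{r}\chi(r)$ over the roots $r$ of $t^2+(1+y_3)t+y_3(1+y_3)$ by swapping the order of summation.

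It then remains to compute $A_3,A_1,A_2$. For $A_3$, the pattern $(+,+,+)$ turns the second equation into $y_1^2+y_2^2+y_3^2+1=0$, so $A_3$ is the number of solutions of \eqref{N(1,1,1)solution} all of whose coordinates are squares, i.e. $A_3=N_{(1,1,1)}$ by Theorem~\ref{N(1,1,1)value}. For $A_1$, the pattern $(+,-,-)$ gives $y_1^2-y_2^2-y_3^2+1=0$; eliminating through $y_2+y_3=-(y_1+1)$ and $y_2^2+y_3^2=y_1^2+1$ yields $y_2y_3=y_1$, so $y_2,y_3$ are the roots of $t^2+(y_1+1)t+y_1=(t+1)(t+y_1)$, that is $\{y_2,y_3\}=\{-1,-y_1\}$. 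As $\chi(-1)=-1$ and $\chi(-y_1)=-\chi(y_1)$, the character constraints hold automatically for every nonzero square $y_1$, and counting ordered pairs (with $y_1=1$ contributing only one, from the repeated root) gives $A_1=2\bigl(\tfrac{p^n-1}{2}-1\bigr)+1=p^n-2$.

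The remaining case, $A_2$ (pattern $(+,+,-)$), is where the character sums enter. The second equation is $y_1^2+y_2^2-y_3^2+1=0$; substituting $y_3=-(y_1+y_2+1)$ and expanding collapses it to $(y_1+1)(y_2+1)=1$, so $y_1\neq-1$, $y_2=-y_1/(y_1+1)$ and $y_3=-(y_1^2+y_1+1)/(y_1+1)$. Translating $y_2,y_3\neq0$ and $(\chi(y_1),\chi(y_2),\chi(y_3))=(1,1,-1)$ into conditions on $y_1$ alone gives
\[
A_2=\#\bigl\{y_1\in\gf_{p^n}:\ \chi(y_1)=1,\ \chi(y_1+1)=-1,\ \chi(y_1^2+y_1+1)=-1\bigr\}.
\]
I would evaluate this by writing it as $\tfrac18\sum_{y_1\in\gf_{p^n}}(1+\chi(y_1))(1-\chi(y_1+1))(1-\chi(y_1^2+y_1+1))$, correcting for the $1+\chi(-3)$ roots of $y_1^2+y_1+1$ (each of which contributes $4$ to the expanded sum, while $y_1=0$ and $y_1=-1$ contribute $0$), and then evaluating the expanded sum: the quadratic pieces by Lemma~\ref{2}, and the three cubic pieces $\sum\chi(y_1(y_1^2+y_1+1))$, $\sum\chi((y_1+1)(y_1^2+y_1+1))$, $\sum\chi((y_1^2+y_1)(y_1^2+y_1+1))$ by Lemma~\ref{sums}, parts~1)--3). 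This yields $A_2=\tfrac18\bigl(p^n-3\lambda^{(1)}_{p,n}-3-4\chi(-3)\bigr)$. Substituting $A_1$, $A_2$, and $A_3=N_{(1,1,1)}$ into $n_4=4A_2+3A_1+A_3$, inserting the value of $N_{(1,1,1)}$ from Theorem~\ref{N(1,1,1)value}, and simplifying then gives the stated value of $n_4$. I expect the main difficulty to lie in the two exact eliminations above — spotting that the pattern-$(+,-,-)$ quadratic factors as $(t+1)(t+y_1)$ and that the pattern-$(+,+,-)$ relation reduces to $(y_1+1)(y_2+1)=1$ — together with the disciplined treatment of the degenerate points that is needed for the $A_2$ character-sum evaluation to be exact.
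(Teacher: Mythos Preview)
Your proposal is correct and follows essentially the same approach as the paper: the same involution $(y_1,y_2,y_3)\mapsto(y_1/y_3,y_2/y_3,1/y_3)$ to reduce to three character patterns, the same appeal to Theorem~\ref{N(1,1,1)value} for the all-square case, the same rational parametrisation in $y_1$ for the $(+,+,-)$ case evaluated via Lemma~\ref{sums} parts 1)--3), and the same count $p^n-2$ for the $(+,-,-)$ case (your factorisation $(t+1)(t+y_1)$ is a cleaner way to reach the paper's solution families $(y_1,-y_1,-1)$ and $(-y_3,-1,y_3)$). One small wording issue: you define $A_k$ as the number of solutions with exactly $k$ square coordinates but then use it as the count for a single ordered pattern; with the former definition the decomposition would read $n_4=A_0+A_1+A_2+A_3$, so make clear you mean the latter.
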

	\begin{proof}
		Denote by $n_{(i,j,k)}$ the number of solutions $(y_1,y_2,y_3)$ of \eqref{n_4equationsystem} with $(\chi(y_1), \chi(y_2), \chi(y_3))=(i,j,k)$, where $i,j,k\in\{\pm 1\}$. By a similar proof of Theorem \ref{N(1,1,1)value}, we obtain $n_{(1,1,-1)}=n_{(1,-1,1)}=n_{(-1,1,1)}=n_{(-1,-1,-1)}$ and $n_{(1,-1,-1)}=n_{(-1,1,-1)}=n_{(-1,-1,1)}$. Then 
		\[n_{4}=n_{(1,1,1)}+4n_{(1,1,-1)}+3n_{(1,-1,-1)}.\]
		Note that for each $y_{i}\ne 0~ (i=1,2,3),$ we have ${y^d_i}=\chi(y_i)y^{2}_i$. To determine $n_{(1,1,1)}$, $n_{(1,1,-1)}$ and $n_{(1,-1,-1)}$, we shall distinguish three cases and  discuss \eqref{n_4equationsystem} in each case.
		\begin{enumerate}
		\item Case \rm\uppercase\expandafter{\romannumeral1}. ($\chi(y_{1}), \chi(y_2), \chi(y_3))=(1,1,1)$. Then (\ref{n_4equationsystem}) becomes
		\begin{equation*}
			\left\{ \begin{aligned}{}
				{y_1} + {y_2} + {y_3} + 1 &= 0\\
				{y^{2}_1}+ {y^{2}_2} + {y^{2}_3} + 1 &= 0.
			\end{aligned} \right.
		\end{equation*}
		By Theorem \ref{N(1,1,1)value}, we have 
		\[n_{(1,1,1)}=\frac{1}{8}\big{(}p^n+3\lambda^{(1)}-6\lambda^{(2)}_{p,n}-15-16\chi(-3)\big{)}.\]
		
		\item Case \rm\uppercase\expandafter{\romannumeral2}. $(\chi(y_{1}), \chi(y_2), \chi(y_3))=(1,1,-1)$. Then (\ref{n_4equationsystem}) becomes
		\begin{equation}\label{n(1,1,-1)}
			\left\{ \begin{aligned}{}
				{y_1} + {y_2} + {y_3} + 1 &= 0\\
				{y^{2}_1}+ {y^{2}_2} - {y^{2}_3} + 1 &= 0.
			\end{aligned} \right.
		\end{equation}
		Next, we determine $ n_{(1,1,-1)} $. From  (\ref{n(1,1,-1)}), we have ${y_2} + {y_3}=-{y_1}-1$ and ${y^{2}_2} - {y^{2}_3}=-{y^{2}_1}-1$. Note that (\ref{n(1,1,-1)}) has no solution when $ y_{1}=-1 $. Since $y_{1}\ne -1$ and $y_2+y_3\neq 0$, we obtain $y_{2}-y_{3}=\frac{y^2_2-y^2_3}{y_2+y_3}=\frac{y_{1}^{2}+1}{y_{1}+1}$, consequently  $y_{2}=-\frac{y_{1}}{y_{1}+1}$ and $y_{3}=-\frac{y_{1}^{2}+y_{1}+1}{y_{1}+1}$. Note that $y_2$ and $y_3$ are uniquely determined by $y_1$, $(y_{1},y_{2},y_{3})$ is a desired solution with $(\chi(y_{1}), \chi(y_2), \chi(y_3))=(1,1,-1)$ if and only if 
		$\chi(y_1)=1$, $\chi(-\frac{y_{1}}{y_{1}+1})=1$, and $\chi(-\frac{y_{1}^{2}+y_{1}+1}{y_{1}+1})=-1$. We conclude that
		\begin{align*}
			n_{(1,1,-1)}=&\#\big\{y_{1}\in \gf_{{p^n}}: \chi(y_{1})=1, \chi(-\frac{y_{1}}{y_{1}+1})=1,\chi(-\frac{y_{1}^{2}+y_{1}+1}{y_{1}+1})=-1\big\}\\
			=&\#\big\{y_{1}\in \gf_{{p^n}}: \chi(y_{1})=1, \chi(y_{1}+1)=-1,\chi(y_{1}^{2}+y_{1}+1)=-1\big\}\\
			=&\frac{1}{8}\sum_{\substack{y_{1}\neq 0,-1,\\y_{1}^2+y_{1}+1\ne0}}\big{(}1+\chi(y_{1})\big{)}\big{(}1-\chi(y_{1}+1)\big{)}\big{(}1-\chi(y_{1}^{2}+y_{1}+1)\big{)}\\
			=&\frac{1}{8}\big{(}\sum_{y_1\in\gf_{p^n}}-\sum_{y_1=0}-\sum_{y_1=-1}-\sum_{y_{1}^2+y_{1}+1=0}\big{)}\big{(}1+\chi(y_{1})\big{)}\big{(}1-\chi(y_{1}+1)\big{)}\big{(}1-\chi(y_{1}^{2}+y_{1}+1)\big{)}\\
			=&-\frac{1+\chi(-3)}{2}+\frac{1}{8}\sum_{y_1\in\gf_{p^n}}\big{(}1+\chi(y_{1})\big{)}\big{(}1-\chi(y_{1}+1)\big{)}\big{(}1-\chi(y_{1}^{2}+y_{1}+1)\big{)}.\\
		\end{align*}
		The above identities hold since if $y_{1}^{2}+y_{1}+1=0$, then $y_1\neq0, -1$, $y_1$ satisfies $\chi(y_1)=\chi\big{(}(y_1+1)^2\big{)}=1$ and $\chi(y_1+1)=\chi(-y^2_1)=-1$. The number of such $y_1$ is $\big{(}1+\chi(-3)\big{)}$. Hence $\sum\limits_{y_{1}^2+y_{1}+1=0}\big{(}1+\chi(y_{1})\big{)}\big{(}1-\chi(y_{1}+1)\big{)}\big{(}1-\chi(y_{1}^{2}+y_{1}+1)\big{)}=4\big{(}1+\chi(-3)\big{)}$.
		
		By Lemmas \ref{2} and \ref{sums}, we have 
		\begin{align*}
			n_{(1,1,-1)}=&-\frac{1+\chi(-3)}{2}+\frac{1}{8}\sum\limits_{y_{1}\in \gf_{{p^n}}}\big{(}1+\chi(y_{1})\big{)}\big{(}1-\chi(y_{1}+1)\big{)}\big{(}1-\chi(y_{1}^{2}+y_{1}+1)\big{)}\\={}&-\frac{1+\chi(-3)}{2}+\frac{1}{8}\sum\limits_{y_{1}\in \gf_{{p^n}}}\big{[}1+\chi(y_1)-\chi(y_1+1)-\chi(y_{1}^{2}+y_{1}+1)-\chi(y_{1}(y_1+1))\\{}&-\chi\big{(}y_1(y_{1}^2+y_{1}+1)\big{)}+\chi\big{(}(y_{1}+1)(y_{1}^{2}+y_{1}+1)\big{)}+\chi\big{(}y_1(y_{1}+1)(y_{1}^{2}+y_{1}+1)\big{)}\big{]}\\={}&\frac{1}{8}\big{(}p^n-3\lambda^{(1)}_{p,n}-3-4\chi(-3)\big{)},
		\end{align*}
		where $\lambda^{(1)}_{p,n}$ was defined in \eqref{sum1}.
		
		\item Case \rm\uppercase\expandafter{\romannumeral3}. ($\chi(y_{1}), \chi(y_2), \chi(y_3))=(1,-1,-1)$. Then (\ref{n_4equationsystem}) becomes
		\begin{equation}\label{n(1,-1,-1)}
			\left\{ \begin{aligned}{}
				{y_1} + {y_2} + {y_3} + 1 &= 0\\
				{y^{2}_1} - {y^{2}_2} - {y^{2}_3} + 1 &= 0.
			\end{aligned} \right.
		\end{equation}
		If $ y_{3}=-1$, we can obtain $ y_{2}=-y_{1}$, (\ref{n(1,-1,-1)}) has solutions with type $(y_{1},-y_{1},-1)$, we obtain $\frac{p^{n}-1}{2}$ solutions since $\chi(y_1)=1$. Now we assume that $y_{3}\neq -1$, note that ${y_1} + {y_2}=-{y_3}-1\neq 0$ and ${y^{2}_1} - {y^{2}_2}=y^{2}_3-1$, we obtain $y_{1}-y_{2}=-y_{3}+1$, consequently $y_{2}=-1$ and $y_{1}=-y_{3}$. Then  (\ref{n(1,-1,-1)}) has solutions $(-y_{3},-1,y_{3})$ with $\chi(y_3)=-1$. Since $y_3\neq -1$, we obtain $\frac{p^n-3}{2}$ solutions. Hence $n_{(1,-1,-1)}=p^n-2$.

		By the values of $n_{(1,1,1)}$, $n_{(1,1,-1)}$ and $n_{(1,-1,-1)}$, we have
		\begin{align*}
			n_{4}=n_{(1,1,1)}+4n_{(1,1,-1)}+3n_{(1,-1,-1)}=\frac{1}{8}\big{(}29p^n-9\lambda^{(1)}_{p,n}-6\lambda^{(2)}_{p,n}-75-32\chi(-3)\big{)},
		\end{align*}  
which completes the proof.
		\end{enumerate}	
	\end{proof}

	\section{The differential spectrum of the power function $F(x)=x^{\frac{p^n+3}{2}}$ over $\gf_{p^n}$}\label{C}
	
	In this section, we shall focus on studying  the differential spectrum of the power function $F(x)=x^{d}$ over $\gf_{p^n}$, where $d=\frac{p^n+3}{2}$ and $p^n\equiv3\pmod4$. When $p=3$, $F(x)=x^{\frac{3^n+3}{2}}$ is equivalent to $x^{\frac{3^{n-1}+1}{2}}$, whose differential spectrum was determined in \cite{CHNC}. We always assume that $p\neq 3$. Denote by $\delta(b)=\big{|} \{x\in\gf_{p^n}:(x+1)^d-x^d=b\} \big{|}$ for any $b\in\gf_{{p^{n}}}$ and $\omega_i= \big{|} \{b\in \gf_{q}: \delta(b)=i\} \big{|}$. Note that $d$ is an odd integer, then $x$ is a solution of $(x+1)^d-x^d=b$ if and only if $-x-1$ is a solution of $(x+1)^d-x^d=b$. We assert that $\delta(b)$ is an even number except for
	\begin{equation*}
		b=(-\frac{1}{2}+1)^d-(-\frac{1}{2})^d=\frac{\chi(2)}{2}.
	\end{equation*} 
	First, we determine the value of $\delta(\frac{\chi(2)}{2})$ as follows.
	\begin{lemma} 
		Using the notation as above, we have	
		\begin{align*}
			\delta(\frac{\chi(2)}{2})&=\left\{\begin{array}{ll}
				3, & \hbox{if  $\chi(2)=\chi(3)=-1$ or $\chi(2)=\chi(-3)=\chi(\frac{-1+\sqrt{-2}}{2})=-1$,}\\
				1, & \hbox{ otherwise.}
			\end{array}\right.
		\end{align*}
	\end{lemma}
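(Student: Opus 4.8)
The plan is a direct but careful case analysis based on the identity $y^{d}=\chi(y)y^{2}$ valid for all $y\in\gf_{p^n}^{*}$ (with $0^{d}=0$), which holds because $d=\frac{p^n+3}{2}$ and $\chi(y)=y^{\frac{p^n-1}{2}}$. First I would observe that $x=0$ and $x=-1$ each give $(x+1)^{d}-x^{d}=1$, while $\frac{\chi(2)}{2}\in\{\tfrac12,-\tfrac12\}$ is never equal to $1$ (here $p\neq 3$), so every solution has $x\neq 0,-1$, and for such $x$ the equation to solve is
\[
\chi(x+1)(x+1)^{2}-\chi(x)x^{2}=\tfrac{\chi(2)}{2}.
\]
Since $d$ is odd, $x\mapsto -x-1$ permutes the solution set with the unique fixed point $x=-\tfrac12$, which is genuinely a solution by the choice of the right-hand side; hence $\delta(\tfrac{\chi(2)}{2})$ is odd, and it remains only to decide whether there is a further solution $x\neq -\tfrac12$ (each such $x$ paired with its distinct companion $-x-1$), giving $\delta=3$, versus none, giving $\delta=1$.

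Then I would split the count by the type $(\chi(x),\chi(x+1))\in\{\pm1\}^{2}$; a short computation using $\chi(-1)=-1$ (valid since $p^n\equiv3\pmod{4}$) shows that $x\mapsto -x-1$ sends type $(a,b)$ to $(-b,-a)$, so the types $(1,1)$ and $(-1,-1)$ are swapped while $(1,-1)$ and $(-1,1)$ are each stable. For types $(1,1)$ and $(-1,-1)$ the equation is linear, $\pm(2x+1)=\tfrac{\chi(2)}{2}$, with single candidates $x=\tfrac{\chi(2)-2}{4}$ and $x=\tfrac{-\chi(2)-2}{4}$; testing the sign conditions on $\chi(x)$ and $\chi(x+1)$ shows these are genuine solutions precisely when $\chi(2)=-1$ and $\chi(3)=-1$ (then they are $-\tfrac34$ and $-\tfrac14$, a single partner pair), and otherwise contribute nothing. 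For types $(1,-1)$ and $(-1,1)$ the equation becomes $4x^{2}+4x+2\pm\chi(2)=0$, with discriminants $-16(1+\chi(2))$ and $16(\chi(2)-1)$; exactly one of them is $0$, recovering $x=-\tfrac12$, and the other equals $-32$, which splits or not according to $\chi(-32)=-\chi(2)$. In particular, when $\chi(2)=1$ the nonzero discriminant $-32$ is a nonsquare and no new solution appears, so $\delta=1$ there.

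The delicate case is $\chi(2)=-1$, where the quadratic from type $(-1,1)$ is $4x^{2}+4x+3=0$ with roots $x_{\pm}=\frac{-1\pm\sqrt{-2}}{2}\in\gf_{p^n}$ (here $\chi(-2)=1$). I would argue: $x_{-}=-x_{+}-1$, so the two roots are interchanged by the involution and are thus simultaneously valid or simultaneously spurious as type-$(-1,1)$ solutions; next, $x_{+}(x_{+}+1)=-\tfrac34$, so $\chi(x_{+})\chi(x_{+}+1)=\chi(-3)$, which shows the pattern $(\chi(x_{+}),\chi(x_{+}+1))=(-1,1)$ can occur only if $\chi(-3)=-1$, and then it occurs exactly when $\chi(x_{+})=\chi\!\big(\tfrac{-1+\sqrt{-2}}{2}\big)=-1$ (the value being independent of the chosen square root precisely because $\chi(-3)=-1$ forces $\chi(x_{+})=\chi(x_{-})$). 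If instead $\chi(x_{+})=1$, then $x_{\pm}$ have type $(1,-1)$ but do not satisfy the type-$(1,-1)$ equation $(2x+1)^{2}=0$, hence are not solutions at all. Finally I would assemble the count: by $\chi(-3)=-\chi(3)$, the condition ``$\chi(2)=-1$, $\chi(3)=-1$'' coming from the linear types and the condition ``$\chi(2)=-1$, $\chi(-3)=-1$, $\chi\big(\tfrac{-1+\sqrt{-2}}{2}\big)=-1$'' coming from the quadratic type are mutually exclusive, each contributes exactly one extra partner pair, and every other situation (in particular all $\chi(2)=1$) yields no solution beyond $x=-\tfrac12$; adding that fixed solution gives $\delta(\tfrac{\chi(2)}{2})=3$ in the two listed cases and $1$ otherwise. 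The main obstacle is precisely this bookkeeping — correctly assigning each root of the auxiliary quadratics to its character type and discarding the roots that do not actually solve the original equation — and the involution $x\mapsto -x-1$ is the tool that keeps it under control.
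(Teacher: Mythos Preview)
Your proof is correct and follows essentially the same approach as the paper: both arguments reduce the equation to $\chi(x+1)(x+1)^{2}-\chi(x)x^{2}=\frac{\chi(2)}{2}$ for $x\neq 0,-1$ and then split into the four cases according to the signs of $\chi(x)$ and $\chi(x+1)$, obtaining the same linear and quadratic auxiliary equations. The only organisational difference is that the paper treats the cases $\chi(2)=1$ and $\chi(2)=-1$ separately from the start (eight subcases in all), whereas you keep $\chi(2)$ as a parameter and exploit the involution $x\mapsto -x-1$ more systematically to pair up the subcases and halve the bookkeeping; the underlying computations and the verification of the sign conditions on the candidate roots (in particular the analysis of $x_{\pm}=\frac{-1\pm\sqrt{-2}}{2}$ via $x_{+}x_{-}=\tfrac34$ and $x_{+}(x_{+}+1)=-\tfrac34$) are the same in substance.
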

	
	\begin{proof}
		First we assume that $\chi(2)=1$. We consider
		\begin{equation}\label{delta1/2eqn}
			(x+1)^d-x^d=\frac{1}{2}.
		\end{equation}
		It is easy to see that $x=0$ and $x=-1$ are not solutions of (\ref{delta1/2eqn}). For $x\neq0,-1$, $x^d=\chi(x)x^{2}$. We discuss the following four disjoint cases.		\begin{enumerate}
		
			\item Case 1. $(\chi(x+1),\chi(x))=(1,1)$. (\ref{delta1/2eqn}) becomes $2x+\frac{1}{2}=0$, then $x=-\frac{1}{4}$, which contradicts to $\chi(x)=1$. Hence (\ref{delta1/2eqn}) has no solution in this case.
		
		\item Case 2. $(\chi(x+1),\chi(x))=(1,-1)$. (\ref{delta1/2eqn}) becomes $x^2+x+\frac{1}{4}=0$, i.e., $x=-\frac{1}{2}$, which is the unique solution of (\ref{delta1/2eqn}) in this case since $\chi(2)=1$.
		
		\item Case 3. $(\chi(x+1),\chi(x))=(-1,1)$. (\ref{delta1/2eqn}) becomes $x^2+x+\frac{3}{4}=0$, the discriminant of this quadratic equation is $-2$, which is a nonsquare in $\gf_{p^n}$. Then (\ref{delta1/2eqn}) has no solution in this case. 
		
		\item Case 4. $(\chi(x+1),\chi(x))=(-1,-1)$. (\ref{delta1/2eqn}) becomes $2x+\frac{3}{2}=0$, then $x=-\frac{3}{4}$. Consequently, $\chi(x+1)=\chi(\frac{1}{4})=1$, which is a contradiction. Then (\ref{delta1/2eqn}) has no solution in this case. 
	\end{enumerate}
		We conclude that $\delta(\frac{\chi(2)}{2})=1$ when $\chi(2)=1$.
		
Now we assume that $\chi(2)=-1$. Consider
		\begin{equation}\label{delta-1/2eqn}
			(x+1)^d-x^d=-\frac{1}{2}.
		\end{equation}
		Similarly, $x=0$ and $x=-1$ are not solutions of (\ref{delta-1/2eqn}). We discuss in the following four disjoint cases.
		\begin{enumerate}
		\item Case 1. $(\chi(x+1),\chi(x))=(1,1)$. (\ref{delta-1/2eqn}) becomes $2x+\frac{3}{2}=0$, i.e., $x=-\frac{3}{4}$. Then $x+1=\frac{1}{4}$, which is always a square in $\gf_{p^n}$. When $\chi(-3)=1$, we have $\chi(-\frac{3}{4})=1$, (\ref{delta-1/2eqn}) has one solution $x=-\frac{3}{4}$ in this case. When $\chi(-3)=-1$, we have $\chi(-\frac{3}{4})=-1$, (\ref{delta-1/2eqn}) has no solution in this case.
		
		\item Case 2. $(\chi(x+1),\chi(x))=(1,-1)$. (\ref{delta-1/2eqn}) becomes $x^2+x+\frac{3}{4}=0$, then $x(x+1)=-\frac{3}{4}$. (\ref{delta-1/2eqn}) has no solution in this case when $\chi(-3)=1$ since $\chi(x(x+1))=-1$. When $\chi(-3)=-1$, the discriminant of $x^2+x+\frac{3}{4}=0$ is $-2$, the $x^2+x+\frac{3}{4}=0$ has two solutions, namely, $x_1=\frac{-1+\sqrt{-2}}{2}$ and $x_2=\frac{-1-\sqrt{-2}}{2}$, where $\sqrt{-2}$ is a fixed square root of $-2$ in $\gf_{p^n}$. We assert that $\chi(x_1)=\chi(x_2)$ since $\chi(x_1x_2)=\chi(\frac{3}{4})=1$. If $\chi(x_1)=1$, then $\chi(x_2)=1$, (\ref{delta-1/2eqn}) has no solution in this case. If $\chi(x_1)=-1$, then we have $\chi(x_2)=-1$, $\chi(x_1+1)=\chi(-x_2)=1$ and $\chi(x_2+1)=\chi(-x_1)=1$. We obtain two solutions of (\ref{delta-1/2eqn}) in this case.  We conclude that (\ref{delta-1/2eqn}) has two solutions in this case if $\chi(-3)=\chi(\frac{-1+\sqrt{-2}}{2})=-1$ and has no solution otherwise.
		
		\item Case 3. $(\chi(x+1),\chi(x))=(-1,1)$. (\ref{delta-1/2eqn}) becomes $x^2+x+\frac{1}{4}=0$, i.e., $x=-\frac{1}{2}$,  which is the unique solution of (\ref{delta-1/2eqn}) in this case since $\chi(2)=-1$.
		
		\item Case 4. $(\chi(x+1),\chi(x))=(-1,-1)$. (\ref{delta-1/2eqn}) becomes $2x+\frac{1}{2}=0$, i.e., $x=-\frac{1}{4}$. Then $\chi(x)=\chi(-\frac{1}{4})=-1$. Moreover, $x+1=\frac{3}{4}$, When $\chi(-3)=1$, we have $\chi(\frac{3}{4})=-1$, (\ref{delta-1/2eqn}) has one solution $x=-\frac{3}{4}$ in this case. When $\chi(-3)=-1$, we have $\chi(\frac{3}{4})=1$, (\ref{delta-1/2eqn}) has no solution in this case.
	\end{enumerate}
		By discussions as above, when $\chi(2)=-1$, we have $\delta(\frac{\chi(2)}{2})=3$ if $\chi(-3)=1$ or $\chi(-3)=\chi(\frac{-1+\sqrt{-2}}{2})=-1$, the desired result follows.	
	\end{proof}
	
	{\bf Remark.} We give an example to show that there exists $p^n$ such that the former condition  $\delta(\frac{\chi(2)}{2})=3$ holds.  Take  $p^n=59$. Then one gets $\chi(2)=-1$, $\chi(-3)=\chi(\frac{-1+\sqrt{-2}}{2})=-1$, hence $\delta(\frac{\chi(2)}{2})=3$.

	One can immediately deduce the values of $\omega_1$ and $\omega_3$ in the following corollary.
	\begin{corollary}\label{omega}
		With the notation as above, we have 
		\begin{align*}
			\omega_1&=\left\{\begin{array}{ll}
				0, & \hbox{if  $\chi(2)=\chi(3)=-1$ or $\chi(2)=\chi(-3)=\chi(\frac{-1+\sqrt{-2}}{2})=-1$,}\\
				1, & \hbox{ otherwise.}
			\end{array}\right.
		\end{align*}
		and 
		\begin{align*}
			\omega_3&=\left\{\begin{array}{ll}
				1, & \hbox{if  $\chi(2)=\chi(3)=-1$ or $\chi(2)=\chi(-3)=\chi(\frac{-1+\sqrt{-2}}{2})=-1$,}\\
				0, & \hbox{ otherwise.}
			\end{array}\right.
		\end{align*}
	\end{corollary}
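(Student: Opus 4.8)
The plan is to read off $\omega_1$ and $\omega_3$ directly from the preceding lemma by exploiting the parity dichotomy established just before it. First I would recall that since $d=\frac{p^n+3}{2}$ is odd, the involution $x\mapsto -x-1$ acts on the solution set of $(x+1)^d-x^d=b$ for each fixed $b$, and its unique fixed point $x=-\frac{1}{2}$ is a solution exactly when $b=\frac{\chi(2)}{2}$. Hence $\delta(b)$ is even whenever $b\neq \frac{\chi(2)}{2}$, so the only value of $b$ for which $\delta(b)$ can be odd — in particular the only $b$ for which $\delta(b)$ can equal $1$ or $3$ — is $b=\frac{\chi(2)}{2}$.

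It then follows immediately that $\omega_1=|\{b:\delta(b)=1\}|$ and $\omega_3=|\{b:\delta(b)=3\}|$ each lie in $\{0,1\}$: they equal $1$ precisely when the single candidate $b=\frac{\chi(2)}{2}$ satisfies $\delta(b)=1$, respectively $\delta(b)=3$. At this point I would simply invoke the lemma, which determines $\delta(\frac{\chi(2)}{2})$ exactly: it is $3$ when $\chi(2)=\chi(3)=-1$ or $\chi(2)=\chi(-3)=\chi(\frac{-1+\sqrt{-2}}{2})=-1$, and $1$ otherwise. Substituting, $\omega_3=1$ exactly under those two conditions and $\omega_3=0$ elsewhere, while $\omega_1$ is the complementary indicator; this is exactly the statement of the corollary.

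There is essentially no obstacle here: the corollary is a bookkeeping consequence of the lemma. The only point needing a line of justification is the parity remark — that $\delta(b)$ odd forces $b=\frac{\chi(2)}{2}$ — and that has already been observed in the text preceding the lemma, so the proof reduces to a two- or three-line deduction.
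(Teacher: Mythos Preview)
Your proposal is correct and matches the paper's approach exactly: the paper states the corollary without a written proof, noting only that one can ``immediately deduce'' it from the preceding lemma together with the parity observation (already made in the text) that $\delta(b)$ is even unless $b=\frac{\chi(2)}{2}$. Your two- or three-line deduction is precisely the intended argument.
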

	
We now investigate the value of $\delta(1)$. Such value will help determine the number of solutions of a specific equation system (namely, System (\ref{equ3}) involved in the proof of the following main result (Theorem \ref{N_4value})).

		\begin{lemma}\label{delta1}
		With the notation as above, we have $\delta(1)=3+\chi(-3)$.
	\end{lemma}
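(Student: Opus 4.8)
The plan is to evaluate $\delta(1)$ by exactly the bookkeeping already used for $\delta(\frac{\chi(2)}{2})$: first peel off the ``trivial'' solutions lying in $\{0,-1\}$, then substitute $x^d=\chi(x)x^2$ and $(x+1)^d=\chi(x+1)(x+1)^2$ for $x\notin\{0,-1\}$ and split the count according to the four possibilities for $(\chi(x+1),\chi(x))$.

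First I would record that $d=\frac{p^n+3}{2}$ is odd, since $p^n\equiv3\pmod4$ forces $p^n+3\equiv2\pmod4$; hence $(-1)^d=-1$. Therefore $x=0$ gives $(0+1)^d-0^d=1$ and $x=-1$ gives $0-(-1)^d=1$, so both are solutions of $(x+1)^d-x^d=1$, contributing $2$ to $\delta(1)$. For $x\notin\{0,-1\}$ the equation becomes $\chi(x+1)(x+1)^2-\chi(x)x^2=1$. In the case $(\chi(x+1),\chi(x))=(1,1)$ this reads $2x+1=1$, i.e. $x=0$; in the case $(1,-1)$ it reads $2x(x+1)=0$; in the case $(-1,-1)$ it reads $-2x-1=1$, i.e. $x=-1$. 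In all three cases the only candidate solutions lie in $\{0,-1\}$ and are therefore excluded, so these three cases contribute nothing.

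The only remaining case is $(\chi(x+1),\chi(x))=(-1,1)$, where the equation reduces to $x^2+x+1=0$, a quadratic of discriminant $-3$. Since $p\ne3$ this polynomial has exactly $1+\chi(-3)$ roots in $\gf_{p^n}$. The one step needing care is to check that each such root automatically satisfies the sign constraints $\chi(x+1)=-1$ and $\chi(x)=1$, so that no root is spuriously discarded and none is assigned to a wrong case: from $x^2=-(x+1)$ one gets $1=\chi(-1)\chi(x+1)=-\chi(x+1)$, hence $\chi(x+1)=-1$, and from $x(x+1)=-1$ together with $\chi(-1)=-1$ one gets $\chi(x)\chi(x+1)=-1$, hence $\chi(x)=1$. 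Conversely, a root with these signs genuinely solves the original equation, since then $(x+1)^d-x^d=-(x+1)^2-x^2=-(2(x^2+x)+1)=-(-1)=1$.

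Combining the contributions, $\delta(1)=2+\bigl(1+\chi(-3)\bigr)=3+\chi(-3)$, as claimed. I do not expect a genuine obstacle; the only delicate point is the consistency check in the last case, i.e. confirming that the two sign conditions come for free for the roots of $x^2+x+1=0$ and that these roots never coincide with the excluded values $0$ and $-1$.
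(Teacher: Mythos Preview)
Your proof is correct and follows essentially the same approach as the paper: both check that $x=0$ and $x=-1$ are solutions, then split the remaining $x$ according to the four sign patterns $(\chi(x+1),\chi(x))$, find that three of the four cases only produce the excluded values, and verify in the $(-1,1)$ case that the $1+\chi(-3)$ roots of $x^2+x+1=0$ automatically satisfy the required sign constraints. The only cosmetic difference is how the sign check is phrased (the paper uses $x=(x+1)^2$ and $x+1=-x^2$ directly, while you use $x^2=-(x+1)$ and $x(x+1)=-1$); the substance is identical.
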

	
	\begin{proof}
		We consider 
		\begin{equation}\label{delta1eqn}
			(x+1)^d-x^d=1.
		\end{equation}
		It is easy to see that $x=0$ and $x=-1$ are solutions of (\ref{delta1eqn}). For $x\neq0,-1$, we discuss in the following four disjoint cases.
		\begin{enumerate}
			\item Case 1. $(\chi(x+1),\chi(x))=(1,1)$. (\ref{delta1eqn}) becomes $2x+1=1$, i.e., $x=0$, which is a contradiction. Hence (\ref{delta1eqn}) has no solution in this case.
		
		\item Case 2. $(\chi(x+1),\chi(x))=(1,-1)$. (\ref{delta1eqn}) becomes $2(x^2+x)=0$, then $x(x+1)=0$, which is a contradiction. Hence (\ref{delta1eqn}) has no solution in this case.

		\item Case 3. $(\chi(x+1),\chi(x))=(-1,1)$. (\ref{delta1eqn}) becomes $x^2+x+1=0$, the discriminant of this quadratic equation is $-3$. If $\chi(-3)=-1$, then (\ref{delta1eqn}) has no solution in this case. If $\chi(-3)=1$, $x^2+x+1=0$ has two distinct solutions. Moreover, if $x$ satisfies $x^2+x+1=0$, then $x\neq0, -1$,  $\chi(x)=\chi((x+1)^2)=1$ and $\chi(x+1)=\chi(-x^2)=-1$. Then (\ref{delta1eqn}) has two solutions in this case when $\chi(-3)=1$.
		
		\item Case 4. $(\chi(x+1),\chi(x))=(-1,-1)$. (\ref{delta1eqn}) becomes $-2x-1=1$, i.e., $x=-1$, which is a contradiction. Hence (\ref{delta1eqn}) has no solution in this case.
	\end{enumerate}

The desired result follows from the discussions above.
	\end{proof}
	
	To determine the differential spectrum of $F$, it remains to calculate $N_4$, which denotes the number of solutions of the equation system (\ref{equsystem}) given by  Lemma \ref{identicalequation}. Note that $d=\frac{p^n+3}{2}$ is odd when $p^n\equiv3\pmod4$, the number of solutions $(x_{1},x_{2},x_{3},x_{4})\in (\gf_{p^{n}})^{4}$ of the equation system
	\begin{equation}\label{equationsystem}
		\left\{ \begin{array}{ll}
			{x_1} + {x_2} + {x_3} + {x_4} &= 0\\
			{x_1^d} + {x_2^d} + {x_3^d} + {x_4^d} &= 0
		\end{array} \right.
	\end{equation}
	is also $N_{4}$. In the following theorem, we determine the value of $N_4$.
	
	\begin{theorem}\label{N_4value}
		Let $p^n\equiv3\pmod4$ and $p\neq 3$. We have \[{N_4}=1+\frac{1}{8}(p^n-1)(29p^n-9\lambda^{(1)}_{p,n}-6\lambda^{(2)}_{p,n}+5),\] 
		where $\lambda^{(1)}_{p,n}$ and $\lambda^{(2)}_{p,n}$ are defined in \eqref{sum1} and \eqref{sum2}, respectively.
	\end{theorem}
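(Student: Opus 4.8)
The plan is to reduce the computation of $N_4$ to the quantity $n_4$ already determined in Theorem~\ref{n_4value}, together with the easily-counted degenerate solutions. First I would exploit the homogeneity of the system \eqref{equationsystem}: if $(x_1,x_2,x_3,x_4)$ is a solution with not all $x_i$ zero, then so is $(\lambda x_1,\lambda x_2,\lambda x_3,\lambda x_4)$ for every $\lambda\in\gf_{p^n}^*$, because $d$ is odd and $x\mapsto x^d$ is then a bijection commuting with scaling only up to the factor $\lambda^d$ — more precisely both equations are homogeneous (the first of degree $1$, the second of ``degree'' $d$) so the solution set off the origin is a union of punctured lines through the origin. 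Hence I would first isolate the unique solution $(0,0,0,0)$, contributing $1$, and then count the solutions with some coordinate nonzero by fixing a normalization.

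The natural normalization is to divide through by a nonzero coordinate. Because the roles of $x_1,x_2,x_3,x_4$ are symmetric up to sign considerations and $d$ is odd, I would set $x_4\neq 0$ (handling first the sub-case where $x_4=0$ but some other $x_i\neq 0$, which by symmetry reduces to a $3$-variable homogeneous system, ultimately to counting on lines again and then to a smaller instance), put $y_i=x_i/x_4$ for $i=1,2,3$, and observe that $(x_1,x_2,x_3,x_4)$ with $x_4\neq0$ is a solution exactly when $(y_1,y_2,y_3)$ satisfies
\begin{equation*}
	\left\{ \begin{array}{ll}
		{y_1} + {y_2} + {y_3} + 1 &= 0\\
		{y^d_1} + {y^d_2} + {y^d_3} + 1 &= 0,
	\end{array} \right.
\end{equation*}
which is exactly System \eqref{n_4equationsystem} except that here the $y_i$ are allowed to be zero, whereas Theorem~\ref{n_4value} counts only solutions in $(\gf_{p^n}^*)^3$. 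So the bookkeeping is: for each triple $(y_1,y_2,y_3)$ solving this system there are $p^n-1$ choices of $x_4\in\gf_{p^n}^*$ (and $x_i=y_ix_4$), giving $(p^n-1)$ times the number of solutions of the $y$-system over $\gf_{p^n}^3$.

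Thus the main task is to pass from $n_4$ (solutions with all $y_i\neq 0$) to the full count over $\gf_{p^n}^3$ by adding the solutions in which at least one $y_i=0$. If $y_3=0$, the system becomes $y_1+y_2+1=0$ and $y_1^d+y_2^d+1=0$, i.e.\ $(x+1)^d-x^d=1$ after setting $y_1=x$, $y_2=-x-1$ and using that $d$ is odd — so the number of such solutions is exactly $\delta(1)=3+\chi(-3)$ by Lemma~\ref{delta1}. By symmetry each of the three coordinates being the distinguished zero gives $\delta(1)$ solutions; I must then apply inclusion–exclusion to subtract the over-counted solutions with two (or three) coordinates zero, which are trivial to enumerate directly from $y_i+y_j+1=0$, $y_i^d+y_j^d+1=0$ with the third variable zero (only finitely many, independent of $p^n$). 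Assembling: $N_4 = 1 + (p^n-1)\bigl(n_4 + (\text{solutions with some }y_i=0)\bigr)$, and substituting the formula for $n_4$ from Theorem~\ref{n_4value} and $\delta(1)=3+\chi(-3)$ should collapse, after simplification, to $1+\tfrac18(p^n-1)(29p^n-9\lambda^{(1)}_{p,n}-6\lambda^{(2)}_{p,n}+5)$; note the $\chi(-3)$ terms must cancel, which is a good consistency check on the degenerate-case count.

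The step I expect to be the main obstacle — or at least the most error-prone — is the careful accounting of the degenerate solutions (those with one or more $y_i$ equal to $0$, and correspondingly those of \eqref{equationsystem} with one or more $x_i=0$) via inclusion–exclusion, keeping track of the $\delta(1)$ contributions and making sure the $-32\chi(-3)$ inside $n_4$ is exactly absorbed. Everything else is bijective bookkeeping plus a substitution of the already-established formulas from Theorems~\ref{N(1,1,1)value} and \ref{n_4value} and Lemma~\ref{delta1}.
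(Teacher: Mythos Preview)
Your proposal is correct and follows essentially the same route as the paper: the paper also separates the all-zero solution, counts the solutions of \eqref{equationsystem} containing zeros (obtaining $1+(10+4\chi(-3))(p^n-1)$ via the cases of exactly two, one, or no zero coordinates, with Lemma~\ref{delta1} supplying the one-zero count), then normalizes by a nonzero coordinate to reduce the remaining case to $n_4$ from Theorem~\ref{n_4value}, and substitutes to see the $\chi(-3)$ terms cancel. The only cosmetic difference is that the paper stratifies directly by the number of zero $x_i$'s rather than first fixing $x_4\neq 0$ and doing inclusion--exclusion on the $y_i$'s, but the arithmetic is identical.
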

	\begin{proof}
		For a solution $ (x_1,x_2,x_3,x_4) $ of (\ref{equationsystem}), first we consider that there exists $ x_i=0 $ for some $ 0 \le i \le 4 $. It is easy to see that $ (0,0,0,0) $ is a solution of (\ref{equationsystem}), and (\ref{equationsystem}) has no solution containing exactly three zeros. If there are  exactly two zeros in  $ (x_1,x_2,x_3,x_4) $,  then without loss of generality, we  can assume that $x_1=x_2=0$, $x_3,x_4\neq 0$. Hence $x_3=-x_4$. We conclude that (\ref{equationsystem}) has $ 6(p^n-1) $ solutions containing only two zeros. We  assume now that there is exactly one zero in $ (x_1,x_2,x_3,x_4) $. Without loss of generality, we  can assume that $x_4=0$, then $x_1,x_2,x_3\neq 0$ and they satisfy  the following system
		\begin{equation}\label{equ3}
			\left\{ \begin{array}{ll}
				{x_1} + {x_2} + {x_3} &= 0\\
				{x^d_1} + {x^d_2} + {x^d_3} &= 0.
			\end{array}\right.
		\end{equation}
		Let $ y_i=\frac{x_i}{x_3} $ for $ i=1,2 $. Thus, we  have $ y_1+y_2+1=0 $ and $ {y^d_1}+{y^d_2}+1=0 $ with $ y_1,y_2 \ne 0 $. Then $ y_2=y_1+1 $ and $ (y_1+1)^d-y^d_1=1 $. By Lemma \ref{delta1}, we know that equation $ (y_1+1)^d-y^d_1=1 $ has $ 1+\chi(-3) $ solutions in $ \gf_{{p^n}}\setminus\{0,-1\} $. Hence we assert \eqref{equ3} has $4(1+\chi(-3))(p^n-1)$ solutions containing only one zero. We conclude that (\ref{equationsystem}) has $ 1+(10+4\chi(-3))(p^n-1) $ solutions containing zeros. 
		
		Next we consider $ x_i \ne 0 $ for $ 1 \le i \le 4 $. Let $ y_i=\frac{x_i}{x_4} $ for $ i=1,2,3 $. We have
		\begin{equation}\label{n4value}
			\left\{ \begin{array}{ll}
				{y_1} + {y_2} + {y_3} + 1 &= 0\\
				{y^d_1} + {y^d_2} + {y^d_3} + 1 &= 0.
			\end{array} \right.
		\end{equation}
		Denote by $n_4$ the number of solutions  $(y_1,y_2,y_3)\in(\gf_{p^n}^*)^3$ of (\ref{n4value}). By Theorem \ref{n_4value},  the value of $n_4$ is given. Then
		$N_{4}=1+(10+4\chi(-3))(p^n-1)+n_4(p^n-1)$. We complete the proof.
	\end{proof}
	
We are now in a position to determine the differential spectrum of $F(x)=x^{\frac{p^n+3}{2}}$ over $\gf_{p^n}$, which is the main result of the article.
	
	\begin{theorem}\label{differential spectrum}
		Let $F(x)=x^{\frac{p^{n}+3}{2}}$ be the power function over $\gf_{p^{n}}$, where $p^n\equiv3\pmod4$ and $p\neq 3$. The differential spectrum of $F$ is
		\begin{align*}
			DS_{F}=\big\{&\omega_{0}=\frac{1}{64}(37p^n-9\lambda^{(1)}_{p,n}-6\lambda^{(2)}_{p,n}+5),\\&
			\omega_{2}=\frac{1}{32}(11p^n+9\lambda^{(1)}_{p,n}+6\lambda^{(2)}_{p,n}-21),\\&
			\omega_{3}=1,\\&
			\omega_{4}=\frac{1}{64}(5p^n-9\lambda^{(1)}_{p,n}-6\lambda^{(2)}_{p,n}-27)\big\}
		\end{align*}
		when $\chi(2)=\chi(3)=-1$ or $\chi(2)=\chi(-3)=\chi(\frac{-1+\sqrt{-2}}{2})=-1$, and is 
		\begin{align*}
			DS_{F}=\big\{&\omega_{0}=\frac{1}{64}(37p^n-9\lambda^{(1)}_{p,n}-6\lambda^{(2)}_{p,n}-27),\\&
			\omega_{1}=1,\\&
			\omega_{2}=\frac{1}{32}(11p^n+9\lambda^{(1)}_{p,n}+6\lambda^{(2)}_{p,n}-21),\\&
			\omega_{4}=\frac{1}{64}(5p^n-9\lambda^{(1)}_{p,n}-6\lambda^{(2)}_{p,n}+5)\big\}
		\end{align*}
		otherwise, where $\lambda^{(1)}_{p,n}$ and $\lambda^{(2)}_{p,n}$ were defined in \eqref{sum1} and \eqref{sum2}, respectively. Moreover, when $p^n=11$, $F(x)=x^7$ is an APN permutation. For $p^n\neq 11$, the differential uniformity of $F$ is equal to $4$.
	\end{theorem}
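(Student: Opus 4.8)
The plan is to recover the full differential spectrum from a small linear system and then read off $\Delta_F$. By Theorem~\ref{tor}, when $p^n\equiv 3\pmod 4$ and $p\ne 3$ one has $\Delta_F\le 4$, so only $\omega_0,\omega_1,\omega_2,\omega_3,\omega_4$ can be nonzero. The parity observation recorded just before Corollary~\ref{omega} --- the value $\delta(b)$ is even for all $b$ except $b=\chi(2)/2$, where it equals $1$ or $3$ --- shows that exactly one of $\omega_1,\omega_3$ equals $1$ and the other vanishes, and Corollary~\ref{omega} records which branch occurs in terms of the quadratic characters of $2$, $3$, $-3$ and $\frac{-1+\sqrt{-2}}{2}$. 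Hence only $\omega_0,\omega_2,\omega_4$ remain to be found.

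First I would write the three linear relations among these unknowns: $\omega_0+\omega_2+\omega_4=p^n-\omega_1-\omega_3$ and $2\omega_2+4\omega_4=p^n-\omega_1-3\omega_3$ from \eqref{omegaiomega}, together with $4\omega_2+16\omega_4=\frac{N_4-p^{2n}}{p^n-1}-\omega_1-9\omega_3$ from Lemma~\ref{identicalequation}. Substituting the known $\omega_1,\omega_3$ from Corollary~\ref{omega} and the value of $N_4$ from Theorem~\ref{N_4value} (recall $N_4$ is expressed through $\lambda^{(1)}_{p,n}$ and $\lambda^{(2)}_{p,n}$), the system has a nonsingular coefficient matrix (determinant $16$), and solving it produces exactly the two displayed spectra, one in each branch of Corollary~\ref{omega}. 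This part is mechanical; I would only carry out the elimination, simplify, and check en route that each $\omega_i$ comes out a nonnegative integer, which serves as an internal consistency test on the earlier formulas.

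For the differential uniformity I would argue through the positivity of $\omega_4$. When $p^n=11$, evaluating the two cubic character sums over $\gf_{11}$ directly gives $\lambda^{(1)}_{11,1}=\lambda^{(2)}_{11,1}=4$; this lands in the ``otherwise'' branch (so $\omega_3=0$), and substitution yields $\omega_4=0$ while $\omega_2>0$, so the largest $i$ with $\omega_i\ne 0$ is $2$. Since $\gcd\!\big(\tfrac{p^n+3}{2},p^n-1\big)=1$, $F$ is a permutation, hence an APN permutation. For $p^n\ne 11$ I would show $\omega_4\ge 1$: by the Weil-type bound $|\lambda^{(i)}_{p,n}|\le 2p^{n/2}$ of Theorem~\ref{bound}, the numerator of $\omega_4$ in either branch is at least $5p^n-30p^{n/2}-27$, which is positive once $p^{n/2}\ge 7$, i.e. $p^n\ge 49$; since $\omega_4=\tfrac{1}{64}(\text{numerator})$ is a nonnegative integer, a positive numerator forces $\omega_4\ge 1$. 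The finitely many admissible fields below the threshold, namely $p^n\in\{7,19,23,31,43,47\}$, are then handled one at a time by computing $\lambda^{(1)}_{p,n},\lambda^{(2)}_{p,n}$ and checking $\omega_4\ge 1$ (for instance $\lambda^{(1)}_{7,1}=0$, $\lambda^{(2)}_{7,1}=-4$ give $\omega_4=1$). Combined with $\Delta_F\le 4$ from Theorem~\ref{tor}, this gives $\Delta_F=4$ whenever $p^n\ne 11$.

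I expect the last step to be the genuine obstacle. The linear algebra producing the spectrum is routine once $N_4$ and the pair $\omega_1,\omega_3$ are available, but isolating the single exceptional field requires the character-sum estimate for the ``tail'' of large $p^n$ together with a careful finite verification for the small fields, and one has to fix the threshold carefully so that no admissible $p^n\equiv 3\pmod 4$ with $p\ne 3$ slips through uncovered; one must also confirm that $\omega_4=0$, equivalently $\Delta_F<4$, happens in neither branch except at $p^n=11$.
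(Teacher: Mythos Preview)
Your proposal is correct and follows essentially the same approach as the paper: use Theorem~\ref{tor} to bound $\Delta_F\le 4$, invoke Corollary~\ref{omega} to pin down $\omega_1,\omega_3$, solve the $3\times 3$ linear system coming from \eqref{omegaiomega} and Lemma~\ref{identicalequation} with $N_4$ supplied by Theorem~\ref{N_4value}, and then combine the Weil bound of Theorem~\ref{bound} with a finite check to show $\omega_4\ge 1$ except at $p^n=11$. The only cosmetic differences are that the paper uses the slightly sharper threshold $p^n\ge 47$ (so $47$ need not be checked separately) and handles $p^n=11$ by direct inspection of the DDT of $x^7$ over $\gf_{11}$ rather than by plugging $\lambda^{(1)}_{11,1}=\lambda^{(2)}_{11,1}=4$ into the formula as you do; both routes yield the same spectrum $\{\omega_0=5,\omega_1=1,\omega_2=5\}$.
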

	\begin{proof}
		By \eqref{i^2omega}, Theorem \ref{tor} and Theorem \ref{N_4value},  the elements $\omega_{i}$ ($i\in \{0, 1, 2, 3, 4\}$) in the differential spectrum satisfy the following constrain
		\begin{equation}\label{i2omegavalue}
			\sum_{i=0}^{4}i^{2}\cdot\omega_{i}=\frac{1}{8}(21p^n-9\lambda^{(1)}_{p,n}-6\lambda^{(2)}_{p,n}-3).
		\end{equation}
		By Corollary \ref{omega}, \eqref{omegaiomega} and \eqref{i2omegavalue}, $\omega_{0}, \omega_{2}$ and $\omega_{4}$ satisfy 
		\begin{equation*}
			\left\{\begin{array}{ll}
				\omega_{0}+\omega_{2}+\omega_{4}&=p^n-1\\
				2\omega_{2}+4\omega_{4}&=p^{n}-3\\
				4\omega_{2}+16\omega_{4}&=\frac{1}{8}(21p^n-9\lambda^{(1)}_{p,n}-6\lambda^{(2)}_{p,n}-75)
			\end{array}\right.
		\end{equation*}
		when $\chi(2)=\chi(3)=-1$ or $\chi(2)=\chi(-3)=\chi(\frac{-1+\sqrt{-2}}{2})=-1$, and they satisfy 
		\begin{equation*}
			\left\{\begin{array}{ll}
				\omega_{0}+\omega_{2}+\omega_{4}&=p^n-1\\
				2\omega_{2}+4\omega_{4}&=p^{n}-1\\
				4\omega_{2}+16\omega_{4}&=\frac{1}{8}(21p^n-9\lambda^{(1)}_{p,n}-6\lambda^{(2)}_{p,n}-11)
			\end{array}\right.
		\end{equation*}
		otherwise. By solving the above two equation systems, the differential spectrum of $F$ follows. Moreover, by Theorem \ref{bound}, we obtain $|9\lambda^{(1)}_{p,n}+6\lambda^{(2)}_{p,n}|\leq 30p^{\frac{n}{2}}$ and then
		\[\omega_{4}\geq \frac{1}{64}(5p^n-9\lambda^{(1)}_{p,n}-6\lambda^{(2)}_{p,n}-27)\geq \frac{1}{64}(5p^n-30p^{\frac{n}{2}}-27)>0\]
		when $p^n\geq 47$.
		The numerical results show that $\omega_4\geq 1$ for $p^n\in\{7,19,23,31,43\}$, where  $p^n \equiv 3 (\mathrm{mod}~4)$ and $p\neq 3$. We assert that $\omega_4\geq 1$ for all $p^n \equiv 3 (\mathrm{mod}~4)$ , $p\neq 3$ and $p^n\neq 11$.
		This implies that the differential uniformity of such $F$ is $4$. The very specific case $p^n=11$ can be calculated by computing directly the differential spectrum of $F(x)=x^{7}$ over $\gf_{11}$. We find that such spectrum is given by  $\{\omega_{0}=5,\omega_{1}=1,\omega_{2}=5\}$, which completes the proof.
	\end{proof}

Below, we explicit the differential spectrum of $F(x)=x^{\frac{p^n+3}{2}}$ over $\gf_{p^{n}}$ for some specific values of $p$ and $n$.
\begin{example}
	\begin{itemize}
		\item For $p=59$ and $n=1$, the power function $F(x)=x^{31}$ over $\gf_{59}$ is differentially 4-uniform with differential spectrum
		\begin{equation*}
			DS_{F}=\{\omega_{0}=34,\omega_{2}=20,\omega_{3}=1,\omega_{4}=4\}.
		\end{equation*}
		
		\item For $p=11$ and  $ n=3$, the power function $F(x)=x^{667}$ over $\gf_{11^{3}}$ is differentially 4-uniform with differential spectrum
		\begin{equation*}
		   DS_{F}=\{\omega_{0}=785,\omega_{1}=1,\omega_{2}=425,\omega_{4}=120\}.
		\end{equation*}
	
	\item For $p=19$ and  $ n=3$, the power function $F(x)=x^{3431}$ over $\gf_{19^{3}}$ is differentially 4-uniform with differential spectrum
	\begin{equation*}
	     DS_{F}=\{\omega_{0}=3927,\omega_{2}=2434,\omega_{3}=1,\omega_{4}=497\}.
	\end{equation*}
	\end{itemize}
\end{example}	
	
	\section{Concluding remarks}\label{E}

In this paper, we investigated the differential spectrum of the permutation power functions $x^{\frac{p^n+3}{2}}$ over $\gf_{p^n}$ and its related differential uniformity in the case where $p^n \equiv 3 (\mathrm{mod}~4)$ and $p\neq 3$. Our results solve a problem  left open since 1997.  The differential spectrum of $F$ is closely related to two quadratic character sums $\lambda^{(1)}_{p,n}$ and $\lambda^{(2)}_{p,n}$, defined by \eqref{sum1} and \eqref{sum2}, respectively. The character sums can be evaluated by employing ingredients from the theory of elliptic curves over finite fields. The resulting results show that the permutation power functions $F$ have either  excellent or good differential properties by processing a low differential uniformity, namely they are APN permutation when $p^n \equiv 3 (\mathrm{mod}~4)$, $p\neq 3$ and $p^n=11$ and are differentially $4$-uniform permutation when $p^n \equiv 3 (\mathrm{mod}~4)$, $p\neq 3$ and $p^n\neq 11$. We believe  that our techniques could help to determine other cardinalities intervening in similar (or close) problems within this general framework.

	%\bibliographystyle{plain}
	%\bibliography{2dsref.bib}

\begin{thebibliography}{99}
		\bibitem{BD} T. Beth and C. Ding, ``On almost perfect nonlinear permutations," in \emph{Workshop on the Theory and Application of of Cryptographic Techniques}, Springer, pp. 65-76, 1993.
		
		\bibitem{BS} E. Biham and A. Shamir, ``Differential cryptanalysis of DES-like cryptosystems," \emph{J. Cryptology}, vol. 4, no. 1, pp. 3-72, 1991.
		
		\bibitem{BCC} C. Blondeau, A. Canteaut, and P. Charpin, ``Differential properties of power functions," \emph{Int. J. Inf. Coding Theory}, vol. 1, no. 2, pp. 149-170, 2010.
		
		\bibitem{BCHLS} L. Budaghyan, C. Carlet, T. Helleseth, N. Li, and B. Sun, ``On upper bounds for algebraic degrees of APN functions," \emph{IEEE Trans. Inform. Theory}, vol. 64. no. 6, pp. 4399-4411, 2017.
		
		\bibitem{BCL}L. Budaghyan, C. Carlet, and G. Leander, Two classes of quadratic APN binomials inequivalent to power functions, \emph{IEEE Trans. Inform. Theory}, vol. 54, no. 9, pp. 4218-4229, 2008.
		
	
		
		\bibitem{CarletBook} C. Carlet, `` Boolean Functions for Cryptography and Coding Theory," Cambridge University Press, Cambridge, 2021.
		
		\bibitem{CHNC} S.-T. Choi, S. Hong, J.-S. No, and H. Chung, ``Differential spectrum of some power functions in odd prime characteristic," \emph{Finite Fields Appl.}, vol. 21, pp. 11-29, 2013.
		
		\bibitem{CM} R. Coulter and R. Matthews, ``Planar functions and planes of Lenz-Barlotti class \rm\uppercase\expandafter{\romannumeral2},"  \emph{Des. Codes Cryptogr.}, vol. 10, no. 2, pp. 167-184, 1997.
		
		\bibitem{DO} P. Dembowski and T. Ostrom, ``Planes of order n with collineation groups of order $n^2$," \emph{Math. Z.}, vol. 103, no. 3, pp. 239-258, 1968.
		
		\bibitem{DOB1} H. Dobbertin, ``Almost perfect nonlinear power functions on $\gf(2^n)$: The Welch case," \emph{IEEE Trans. Inf. Theory}, vol. 45, no. 4, pp. 1271-1275, 1999.
		
		\bibitem{DOB2} H. Dobbertin, ``Almost perfect nonlinear power functions on $\gf(2^n)$: The Niho case," \emph{IEEE Trans. Inf. Theory}, vol. 151, no. 1-2, pp. 57-72, 1999.
		
		\bibitem{Dobbertin2001} H. Dobbertin, T. Helleseth, P. V. Kumar, and H. Martinsen,  ``Ternary m-sequences with three-valued cross-correlation function:  New decimations of Welch and Niho type,"  {\it IEEE Trans. Inf. Theory}, vol. 47, no. 4, pp. 1473-1481, 2001.
		
			\bibitem{HRS} T. Helleseth, C. Rong, and D. Sandberg, ``New families of almost perfect nonlinear power mappings," \emph{IEEE Trans. Inform. Theory}, vol. 45. no. 2, pp. 475-485, 1999.
		
				\bibitem{HS} T. Helleseth and D. Sandberg, ``Some power mappings with low differential uniformity," \emph{Appl. Algebra Eng. Comm. Comput.}, vol. 8. no. 5, pp. 363-370, 1997.
		
	
		
		\bibitem{JLLQ}S. Jiang, K. Li, Y. Li, and L. Qu, ``Differential Spectrum of a Class of Power Functions," \emph{Journal of Cryptologic Research}, vol. 9, no. 3, pp. 484-495, 2021.
		
		\bibitem{JLLQ2}S. Jiang, K. Li, Y. Li, and L. Qu, ``Differential and boomerang spectrums of some power
		permutations," \emph{Cryptogr. Commun.}, vol. 14, no. 2, pp. 371-393, 2022.
		
		\bibitem{Lei2021} L. Lei, W. Ren, and C. Fan, ``The differential spectrum of a class of power functions over finite fields," \emph{Adv. Math. Commun.}, vol. 15. no. 3, pp. 525-537, 2021.
		
		\bibitem{FF} R. Lidl and H. Niederreiter, \emph{Finite Fields,} Encyclopedia of Mathematics and Its Applications, vol. 20, Cambridge U.K:  Cambridge University Press, 1997.
		
		\bibitem{MXLH} Y. Man, Y. Xia, C. Li, and T. Helleseth, ``On the Differential Properties of the Power Mapping $x^{p^m+2}$," \emph{Finite Fields Appl.}, vol. 84, pp. 102100, 2022.
		
		\bibitem{Nyberg93}K. Nyberg, ``Differentially uniform mappings for cryptography," \emph{in Advances in cryptology-EUROCRYPT'93 (Lecture Notes in Computer Science)}, vol. 765, T. Helleseth Eds. Berlin, Germany, Springer-Verlag, 1994, pp. 55-64.
		
		\bibitem{PT}J. Peng and C. Tan, ``New differentially 4-uniform permutations by modifying the inverse function on subfields,  \emph{Cryptogr. Commun.}, vol. 9, no. 3, pp. 363-378, 2017.
		
		\bibitem{QTLG}L. Qu, Y. Tan, C. Li, and G. Gong,  ``More constructions of differentially 4-uniform permutations on $\gf_{{2^{2k}}}$",  \emph{Des. Codes Cryptogr.}, vol. 78, no. 2, pp. 391-408, 2016.
		
		\bibitem{QTTL}L. Qu, Y. Tan, C. Tan, and C. Li, ``Constructing differentially 4-uniform permutations over $\gf_{{2^{2^k}}}$ via the switching method", \emph{IEEE Trans. Inf.Theory}, vol. 59, no. 7, pp. 4675-4686, 2013.
		
		\bibitem{SilveEC} J. H. Silverman, \emph{ The Arithmetic of Elliptic Curves}, Second Edition.  Heidelberg: Springer, 2009.
		
		\bibitem{TCT}D. Tang, C. Carlet, and X. Tang, ``Differentially 4-uniform bijections by permuting the inverse function",\emph{Des. Codes Cryptogr.}, vol. 77, no. 1, pp. 117-141, 2015.
		
		\bibitem{TZ}Z. Tu and X. Zeng, ``Non-monomial permutations with differential uniformity six, " \emph{Journal of Systems Science and Complexity}, vol. 31, no. 4, pp. 1078-1089, 2018.
		
		\bibitem{XZLH} Y. Xia, X. Zhang, C. Li, and T. Helleseth, ``The differential spectrum of a ternary power mapping," \emph{Finite Fields Appl.}, vol. 64, pp. 1-16, 2020.
		
	
		
		\bibitem{YL} H. Yan and C. Li, ``Differential spectra of a class of power permutations with characteristic 5," \emph{Des. Codes Cryptogr.}, vol. 89, no. 6, pp. 1181-1191, 2021.
		
		\bibitem{YMT}	H. Yan, S. Mesnager, and X. Tan, ``On the differential spectrum of a class of APN power functions over odd characteristic finite fields and their $c$-differential properties." Preprint (under review).
			
		\bibitem{YXLHXL} H. Yan, Y. Xia, C. Li, T. Helleseth, M. Xiong, and J. Luo, ``The Differential Spectrum of the Power Mapping $x^{p^{n}-3}$," \emph{IEEE Trans. Inform. Theory}, vol. 68, no. 8,  pp.  5535-5547, 2022.
		
		\bibitem{Yan} H. Yan, Z. Zhou, J. Wen, J. Weng, T. Helleseth, and Q. Wang, ``Differential spectrum of Kasami power permutations over odd characteristic finite fields," \emph{IEEE Trans. Inform. Theory}, vol. 65, no. 10, pp. 6819-6826, 2019.
		
		\bibitem{ZHS}Z. Zha, L. Hu and S. Sun, ``Constructing new differentially 4-uniform permutations from the inverse function ", \emph{Finite Fields Their Appl.}, vol. 25, pp. 64-78, 2014.
		
		\bibitem{ZKW} Z. Zha, G. Kyureghyan, and X. Wang, `` Perfect nonlinear binomials and their semifields," \emph{Finite Fields Appl.}, vol. 15, no. 2, pp. 125-133, 2009.
		\bibitem{ZW} Z. Zha and X. Wang, ``New families of perfect nonlinear polynomial functions," \emph{J. Algebra}, vol. 332, pp. 3912-3918, 2009.
	\end{thebibliography}

\end{document}